\newcommand{\ignore}[1]{{}}
\def\p1{\phantom{1}}
\newtheorem{theorem}{Theorem}
\newtheorem{proposition}{Proposition}
\newtheorem{definition}{Definition}
\newcommand{\full}[1]{{}}
\begin{document}

\title{On the Value of Penalties in Time-Inconsistent Planning}
\author{Susanne Albers \thanks{Department of Computer Science, Technical University of Munich, 85748 Garching, 
Germany; {\tt albers@in.tum.de}. Work supported by the European Research Council, Grant Agreement No. 691672.}\and 
Dennis Kraft\thanks{Department of Computer Science, Technical University of Munich, 85748 Garching, Germany. 
{\tt kraftd@in.tum.de} }}
\date{}

\maketitle
\begin{abstract}
People tend to behave inconsistently over time due to an inherent present bias.
As this may impair performance, social and economic settings need to be adapted accordingly.
Common tools to reduce the impact of time-inconsistent behavior are penalties and prohibition.
Such~tools are called commitment devices.
In recent work Kleinberg and Oren~\cite{KO} connect the design of prohibition-based commitment devices to a combinatorial problem in which edges are removed from a task graph $G$ with $n$ nodes.
However, this problem is NP-hard to approximate within a ratio less than $\sqrt{n}/3$~\cite{AK}.
To address this issue, we propose a penalty-based commitment device that does not delete edges but raises their cost.
The benefits of our approach are twofold.
On the conceptual side, we show that penalties are up to $1/\beta$ times more efficient than prohibition, where $\beta \in (0,1]$ parameterizes the present bias.
On the computational side, we significantly improve approximability by presenting a $2$-approximation algorithm for allocating the penalties.
To complement this result, we prove that optimal penalties are NP-hard to approximate within a ratio of $1.08192$.

\end{abstract}

\section{Introduction}
\label{sec:intro}

Most people make long term plans.
They intend to eat healthy, save money, prepare for exams, exercise regularly and so on.
Curiously, the same people often change their plans at a later point in time.
They indulge in fast food, squander their money, fail to study and skip workouts.
Although change may be necessary due to unforeseen events, people often change their plans even if the circumstances stay the same.
This type of {\em time-inconsistent behavior\/} is a well-known phenomenon in behavioral economics and might impair a person's performance in social or economic domains \cite{A,OR}.

A sensible explanation for time-inconsistent behavior is that people are {\em present biased\/} and assign disproportionately greater value to the present than to the future.
Consider, for instance, a scenario in which a student named Alice attends a course over several weeks.
To pass the course, Alice either needs to solve a homework exercise each week or give a presentation once.
The presentation incurs a onetime effort of~$3$, whereas each homework exercise incurs an effort of $1$.
Assume that she automatically fails the course if she misses a homework assignment before she has given a presentation.
If the course lasts for more than $3$ weeks, she clearly minimizes her effort by giving a presentation in the first week.
Paradoxically, if Alice is present biased, she might solve all homework exercises instead.
The reason for this is the following:

Suppose Alice perceives present effort accurately, but discounts future effort by a factor of~$\beta = 1/3$.
In the first week Alice must decide between solving the homework exercise or giving a presentation.
Clearly, the homework incurs less immediate effort than the presentation.
Furthermore, Alice can still give a presentation next week.
Her perceived effort for doing the homework this week and giving the presentation the week after is $1 + \beta3 = 2$.
To Alice this plan appears more convenient than giving the presentation right away.
Consequently, she does the homework.
However, come next week she changes this plan and postpones the presentation once more.
Her reasoning is the same as in the first week.
Due to her time-inconsistent behavior, Alice continues to postpone the presentation and ends up doing all the homework assignments.

\subparagraph*{Previous Work} Time-inconsistent behavior has been studied extensively in behavioral economics.
For an introduction to the topic refer for example to~\cite{A}.
Alice's scenario demonstrates how time-inconsistency arises whenever people are present biased.
Alice evaluates her preferences based on a well-established discounting model called {\em quasi-hyperbolic-discounting\/}~\cite{L}.
As her story shows, quasi-hyperbolic-discounting tempts people to make poor decisions.
To prevent poor decisions, social and economic settings need to be adapted accordingly.
Depending on the domain, such adaptations might be implemented by governments, companies, teachers or people themselves.
We call these entities {\em designers\/} and their motivation can be benevolent or self-serving.
In either case, the designer's objective is to commit people to a certain goal.
Their tools are called {\em commitment devices\/} and may include rewards, penalty fees and strict prohibition \cite{BKN, OR2}. 

Until recently, the study of time-inconsistent behavior lacked a unifying and expressive framework.
However, groundbreaking work by Kleinberg and Oren closed this gap by reducing the behavior of a quasi-hyperbolic-discounting person to a simple planning problem in task graphs~\cite{KO}.
Their framework has helped to identify various structural properties of social and economic settings that affect the performance of present biased individuals~\cite{KO,TTWXX}.
It has also been extended to people whose present bias varies over time~\cite{GILP} as well as people who are aware of their present bias and act accordingly~\cite{KOR}.
We will formally introduce the framework in Section~\ref{sec:model}.
A significant part of Kleinberg and Oren's work is concerned with the study of a simple yet powerful commitment device based on prohibition~\cite{KO}.
In particular, they demonstrate how performance can be improved by removing a strategically chosen set of edges from the task graph.
The drawback of their approach is its computational complexity.
As it turns out, an optimal commitment device is NP-hard to approximate within a ratio less than $\sqrt{n}/3$, where $n$ denotes the number nodes in the task graph~\cite{AK}.
Currently, the best known polynomial-time approximation achieves a ratio of~$1 + \sqrt{n}$~\cite{AK}.
It should be mentioned that Kleinberg and Oren's framework has also been used to analyze reward-based commitment devices~\cite{AK, TTWXX}.
Unfortunately, their computational complexity does not permit a polynomial-time approximation within a finite ratio unless ${\rm P}={\rm NP}$~\cite{AK}.

\subparagraph*{Our Contribution} To circumvent the theoretical bottleneck mentioned above, we propose a natural generalization of Kleinberg and Oren's commitment device.
Instead of prohibition, our commitment device is based on penalty fees, a standard tool in the economic literature~\cite{BKN, OR2}.
This means that the designer is free to raise the cost of arbitrary edges in the task graph.
We call such an assignment of penalties a {\em cost configuration\/}.
The designer's objective is to construct cost configurations that are as efficient as possible.

In Section~\ref{sec:pvp} we conduct a quantitative comparison between the efficiency of prohibition-based and penalty-based commitment devices.
Assuming that optimal solutions are known, we show that penalties are strictly more powerful than prohibitions.
In particular, we show that penalties may outperform prohibitions by a factor of $1/\beta$ where $\beta$ parameterizes the present bias.
This result is tight.
In Section~\ref{sec:mcc} we investigate the computational complexity of our commitment device.
Using a reduction from $3$-SAT, we argue that the construction of an efficient cost configuration is NP-hard when posed as a decision problem.
A generalization of this reduction proves NP-hardness for approximations within a ratio of $1.08192$.
Unless ${\rm P}={\rm NP}$, this dismisses the existence of a polynomial-time approximation scheme.
While analyzing the complexity of our commitment device we also point to a remarkable structural property.
More specifically, we show that every cost configuration admits another cost configuration of comparable efficiency that assigns its cost entirely along a single path.
Assuming that the path is known in advance, we provide an algorithm for constructing such a cost configuration in polynomial-time.
This result is important for the design of exact algorithms as it reduces the search space to the set of paths through the task graph.
Finally, Section~\ref{sec:optmcc} introduces a $2$-approximation algorithm for our commitment device.
This is the main result of our work and a considerable improvement to the complexity theoretic barrier of $\sqrt{n}/3$ for approximating prohibition-based commitment devices~\cite{AK}.

\section{The Formal Framework}
\label{sec:model}

In the following, we introduce Kleinberg and Oren's framework~\cite{KO}. 
Let $G =(V,E)$ be a directed acyclic graph with $n$ nodes that models a given long-term project.
The edges of $G$ correspond to the tasks of the project and the nodes represent the states.
In particular, there exists a start state $s$ and a target state~$t$.
Each path from $s$ to $t$ corresponds to a valid sequence of tasks to complete the project.
The effort of a specific task is captured by a non-negative cost $c(e)$ assigned to the associated edge $e$.

To complete the project, an agent with a {\em present bias\/} of $\beta\in(0,1]$ incrementally constructs a path from $s$ to $t$ as follows:
At any node $v$ different from $t$, the agent evaluates her {\em lowest perceived cost\/}. 
For this purpose she considers all paths $P$ leading from $v$ to $t$.
However, she only anticipates the cost of the first edge of $P$ correctly; all other edges of $P$ are discounted by her present bias.
More formally, let $d(w)$ denote the cost of a cheapest path from node $w$ to $t$.
The agent's lowest perceived cost at $v$ is defined as ${\zeta(v)= \min\{c(v,w) + \beta d(w) \mid (v,w) \in E\}}$.
We assume that she only traverses edges $(v,w)$ that minimize her anticipated cost, i.e. edges for which \ $c(v,w) + \beta d(w) = \zeta(v)$.
Ties are broken arbitrarily.
For convenience, we define the perceived cost of $(v,w)$ as $\eta(v,w) = c(v,w) + \beta d(w)$.
The agent is motivated by an intrinsic or extrinsic reward $r$ collected at $t$.
As she receives this reward in the future, she perceives its value as $\beta r$ at each node different from $t$.
When located at $v$, she compares her lowest perceived cost to the anticipated reward and continues moving if and only if $\zeta(v) \leq \beta r$.
Otherwise, if $\zeta(v) > \beta r$, we assume she abandons the project.
We call $G$ {\em motivating\/} if she does not abandon while constructing her path from $s$ to $t$.
Note that in some graphs the agent can take several paths from $s$ to $t$ due to ties between incident edges.
In this case, $G$ is considered motivating if she does not abandon on any of these paths.

For the sake of a clear presentation, we will assume throughout this work that each node of $G$ is located on a path from $s$ to $t$.
This assumption is sensible for the following reason:
Clearly, the agent can only visit nodes that are reachable from $s$.
Furthermore, she is not willing to enter nodes that do not lead to the reward.
Consequently, only nodes that are on a path from $s$ to $t$ are relevant to her behavior.
Note that all nodes that do not satisfy this property can be removed from $G$ in a simple preprocessing step.

To illustrate the model, we revisit Alice's scenario from Section~\ref{sec:intro}.
Assume that the course takes $m$ weeks.
We represent each week $i$ by a distinct node $v_i$ and set $s = v_1$.
Furthermore, we introduce a target node $t$ that marks the passing of the course.
Each week $i < m$ Alice can either give a presentation or proceed with the homework.
We model the first case by an edge $(v_i,t)$ of cost $3$ and the latter case by an edge $(v_i,v_{i+1})$ of cost $1$.
In the last week, i.e.\ $i=m$, Alice's only sensible choice is to do the homework.
Therefore, edge $(v_m,t)$ is of cost $1$.
Recall that Alice's present bias is $\beta = 1/3$.
Moreover, assume that her intrinsic reward for passing is $r = 6$.
For $i < m$ her perceived cost of the edges $(v_i,t)$ is $\eta(v_i,t) = c(v_i,t) = 3$.
As this is less than her perceived reward, which is $\beta 6 = 2$, she is never motivated to give a presentation right away.
However, her perceived cost of the edges $(v_i,v_{i+1})$ is at most $\eta(v_i,v_{i+1}) \leq c(v_i,v_{i+1}) + \beta c(v_{i+1},t) \leq 2$.
This matches her perceived reward.
As a result, she walks from $v_1$ to $v_m$ along the edge $(v_i,v_{i+1})$.
Once she reaches $v_m$ she traverses the only remaining edge for a perceived cost of $\eta(v_m,t) = c(v_m,t) = 1$ and passes the course.
This matches our analysis from Section~\ref{sec:intro}.

\section{Prohibition versus Penalty}
\label{sec:pvp}

In this section we demonstrate how the {\em designer} can modify a given project to help the agent reach~$t$.
For this purpose, the designer may have several commitment devices at her disposal.
A straightforward approach is to increase the reward that the agent collects at~$t$.
Although this may keep the agent from abandoning the project prematurely, it has no influence on the path taken by the agent.
Furthermore, increasing the reward may be costly for the designer.
As a result, the designer has two conflicting objectives.
On the one hand, she must ensure that the agent reaches $t$.
On the other hand, she needs to minimize the resources spent.
To deal with this dilemma, Kleinberg and Oren allow the designer to prohibit a strategically chosen set of tasks~\cite{KO}.
This commitment device is readily implemented in their framework.
In fact, it is sufficient to remove all edges of prohibited tasks.
The result is a subgraph $G'$ that may significantly reduce the reward required to motivate the agent.
Unfortunately, an optimal subgraph $G'$ is NP-hard to approximate within a ratio less than $\sqrt{n}/3$~\cite{AK}.

To circumvent this theoretical bottleneck, we propose a different approach.
Instead of prohibiting certain tasks we allow the designer to charge penalty fees.
Such fees could be implemented in several ways; for instance in the form of donations to charity.
Our only assumption is that the designer does not benefit from the fees, i.e.\ there is no incentive to maximize the fees payed by the agent.
Similar to commitment devices based on prohibition, our commitment device is readily implemented in Kleinberg and Oren's framework.
The designer simply assigns a positive extra cost $\tilde{c}(e)$ to the desired edges $e$.
The new cost of $e$ is equal to $c(e) + \tilde{c}(e)$.
We call $\tilde{c}$ a {\em cost configuration}.
Applying a cost configuration to $G$ yields a new task graph with increased edge cost.
All concepts of the original framework carry over immediately.
Sometimes it will be necessary to compare different commitment devices with each other.
To clarify which commitment device we are talking about, we use the following notation whenever necessary:
If we consider a subgraph $G'$, we write $d_{G'}$, $\eta_{G'}$ and $\zeta_{G'}$.
Similarly, if we consider a cost configuration $\tilde{c}$, we write $d_{\tilde{c}}$, $\eta_{\tilde{c}}$ and $\zeta_{\tilde{c}}$.
Moreover, we denote the trivial cost configuration, i.e.\ the one that assigns no extra cost, by $\tilde{0}$.

It is interesting to think of penalty fees as a natural generalization of prohibition.
This becomes particularly apparent in the context of Kleinberg and Oren's framework as we can recreate the properties of any subgraph $G'$ by a cost configuration $\tilde{c}$.
For this purpose, it is sufficient to assign an extra cost of $\tilde{c}(e) = r + 1$ to any edge $e$ not contained in~$G'$.
As a result, the agent's perceived cost of paths along $e$ certainly exceeds her perceived reward.
However, this means that $e$ is irrelevant to the agent's planning and could be deleted from $G$ altogether.
Consequently, penalties are at least as powerful as prohibitions.
But how much more efficient are penalties in the best case?
As the following theorem suggests, cost configurations may outperform subgraphs by a factor of almost~$1/\beta$.

\begin{theorem}
	The ratio between the minimum reward $r$ that admits a motivating subgraph and the reward $q$ of a motivating cost configuration is at most $1/\beta$.
	This bound is tight.
\end{theorem}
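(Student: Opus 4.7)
The statement has two halves, which I would treat separately.

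\emph{Upper bound $r \le q/\beta$.} Starting from any motivating cost configuration $\tilde{c}$ with reward $q$, my plan is to manufacture an explicit motivating subgraph $G'$ whose required reward is at most $q/\beta$. Let $(s,v_1)$ be the edge the agent traverses at $s$ under $\tilde{c}$; by motivation, $c(s,v_1)+\tilde{c}(s,v_1)+\beta d_{\tilde{c}}(v_1) = \zeta_{\tilde{c}}(s) \le \beta q$, which in particular forces $d_{\tilde{c}}(v_1) \le q$. Since every node lies on some $s$--$t$ path, a cheapest (w.r.t.\ $c$) path $\pi$ from $v_1$ to $t$ exists in $G$. I let $G'$ consist of the edge $(s,v_1)$ together with the edges of $\pi$. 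In $G'$ the agent has a unique $s$--$t$ walk, so it suffices to check $\zeta_{G'}(w) \le \beta\cdot(q/\beta) = q$ at every visited node $w$.

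Two structural observations drive the verification. First, because $c \le c+\tilde{c}$ edge-wise, $d(v) \le d_{\tilde{c}}(v)$ for every node $v$. Second, optimal substructure of shortest paths gives $d_{G'}(w) = d(w) \le d(v_1)$ for every $w$ on $\pi$. At $s$, these yield $\zeta_{G'}(s) = c(s,v_1)+\beta d(v_1) \le c(s,v_1)+\tilde{c}(s,v_1)+\beta d_{\tilde{c}}(v_1) = \zeta_{\tilde{c}}(s) \le \beta q$. At a node $w$ on $\pi$ with successor $w'$, I would rewrite $\zeta_{G'}(w) = c(w,w')+\beta d(w') = (1-\beta)\,c(w,w')+\beta d(w) \le d(w) \le d(v_1) \le q$ using $c(w,w') \le d(w)$. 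Taking the maximum, $\max_w \zeta_{G'}(w) \le q$, so $G'$ is motivating with reward $q/\beta$.

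\emph{Tightness.} I would exhibit a parameterized family of task graphs in which the ratio of the minimum subgraph reward to the minimum cost-configuration reward approaches $1/\beta$. The slack in the argument above sits at an intermediate node, where only $\zeta_{G'}(w) \le d(w)$ is used, whereas a penalty in $\tilde{c}$ can inflate the leading edge to keep $\zeta_{\tilde{c}}(w) \le \beta q$. A suitable gadget should trap every motivating subgraph in one of two bad options: keep a tempting shortcut (so the agent eventually hits a node with $\zeta$ close to $q$), or remove it (so $d(v_1)$ blows up, lifting $\zeta_{G'}$ at an intermediate node to nearly $q$). A single, well-chosen cost configuration escapes this dilemma with a small penalty that keeps the perceived cost at $\beta q$ everywhere. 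I expect the upper bound to be the routine, calculational direction, and the tightness construction to be the main obstacle, since it must simultaneously rule out all subgraphs while exhibiting a cost configuration that beats them by the full factor $1/\beta$.
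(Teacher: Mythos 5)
Your upper-bound argument is sound, and it takes a mildly different route from the paper's: the paper simply takes as $G'$ a cheapest $s$--$t$ path, observes that on it the agent never perceives more than $d(s)$ (so $r\le d(s)/\beta$), and notes that a cost configuration only raises costs, so $\zeta_{\tilde c}(s)\ge\beta d(s)$ and hence $q\ge d(s)$. Your variant, which builds the subgraph from the agent's first edge under $\tilde c$ and checks $\zeta_{G'}(w)\le d(w)\le d(v_1)\le d_{\tilde c}(v_1)\le q$ along the appended cheapest path, proves the same inequality $r\le q/\beta$ directly; all steps check out, it is just slightly more elaborate than necessary.

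The genuine gap is the tightness half, which you only announce as a plan: you list properties a gadget ``should'' have but construct nothing and verify nothing, and this is the substantive half of the theorem. The paper's witness is a main path $v_1,\dots,v_{2m+1}$ (with $s=v_1$, $t=v_{2m+1}$) all of whose edges cost $(1-\beta)\varepsilon^2$, where $m=\lceil\beta^{-2}(1-\beta)^{-1}\varepsilon^{-2}\rceil$, plus a free edge from every $v_i$, $i\le 2m$, to a common hub $w$ with $c(w,t)=1/\beta$. The subgraph lower bound of $1/\beta^2$ is not just ``deleting shortcuts inflates $d(v_1)$'': one argues that for $r<1/\beta^2$ the agent must never reach $w$ (there $\zeta=1/\beta>\beta r$); that at each of the first $m+1$ nodes her perceived cost of the main-path edge is at least $1$, whether she plans a later shortcut ($\ge\beta\cdot 1/\beta$) or plans to stay on the path ($\ge m\beta(1-\beta)\varepsilon^2\ge 1/\beta$); hence any surviving shortcut there (perceived cost exactly $1$) could be entered and would strand her at $w$, so all of the first $m+1$ shortcuts must be cut, after which $\zeta_{G'}(v_1)\ge m\beta(1-\beta)\varepsilon^2\ge 1/\beta$, contradicting $r<1/\beta^2$. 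On the other side, an extra cost of $\varepsilon$ on every $(v_i,w)$ makes each shortcut's perceived cost exactly $1+\varepsilon$ while each main-path edge is perceived strictly below $1+\varepsilon$ (planning to shortcut at the next node), so the reward $(1+\varepsilon)/\beta$ motivates, giving a ratio of $1/(\beta(1+\varepsilon))\to 1/\beta$. Note also that your dilemma (``keep the shortcut and the agent eventually hits $\zeta$ close to $q$'') has the wrong mechanism: in the working gadget a kept shortcut hurts because it is tempting yet unmotivating --- free to enter, too expensive to finish --- not because it raises the perceived cost on the intended path. Without an explicit family and both verifications, the tightness claim remains unproven.
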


\begin{proof}
	To see that $r/q \leq 1/\beta$, let $G$ be an arbitrary task graph and consider a subgraph $G'$ whose only edges are those of a cheapest path $P$ from $s$ to $t$.
	Recall that $d(s)$ denotes the cost of~$P$.
	In $G'$ the agent's only choice is to follow $P$.
	Because her perceived cost is a discounted version of the actual cost, she never perceives a cost greater than $d(s)$ in $G'$.
	Consequently, $d(s)/\beta$ is an upper bound on $r$.
	Next, consider an arbitrary cost configuration~$\tilde{c}$.
	As $\tilde{c}$ only increases edge cost, the agent's lowest perceived cost at $s$ is at least $\beta d(s)$.
	We conclude that $q$ must be at least $d(s)$ to be motivating.
	This yields the desired ratio.

	It remains to show the tightness of the result.
	For this purpose, we construct a task graph $G$ such that:
	(a) The minimum reward that admits a motivating subgraph is $1/\beta^2$.
	(b) There exists a cost configuration that is motivating for a reward of $(1+\varepsilon)/\beta$, where $\varepsilon$ is a positive value strictly less than $1$.
	Our construction is a modified version of Alice's task graph.
	Let $m = \lceil \beta^{-2}(1-\beta)^{-1}\varepsilon^{-2} \rceil$ and assume that $G$ contains a path $v_1,\ldots,v_{2m+1}$ whose edges are all of cost $(1-\beta)\varepsilon^2$.
	We call this the {\em main path} and set $s = v_1$ and $t = v_{2m+1}$.
	In addition to the main path, each $v_i$ with $i \leq 2m$ has a {\em shortcut} to $t$ via a common node $w$.
	The edges $(v_i,w)$ are free, whereas $(w,t)$ is of cost $1/\beta$.
	Figure~\ref{fig:ratex} illustrates the structure of~$G$.
	Note that the drawing merges some of the edges $(v_i,w)$ for a concise representation.
	
	\begin{figure}[t]
	\center
		\begin{tikzpicture}[nst/.style={draw,circle,fill=black,minimum size=4pt,inner sep=0pt]}, est/.style={draw,>=latex,->}]
			
			\node[nst] (v1) at (0,0) [label=left:\small{$s$}] {};
			\node[nst] (v2) at (2.5,0) [label=above left:\small{$v_2$}] {};
			\node[nst] (v3) at (5,0) [label=above left:\small{$v_3$}] {};
			\node[nst] (v4) at (7.5,0) [label=above left:\small{$v_{2m}$}] {};
			\node[nst] (v5) at (10,0) [label=right:\small{$t$}] {};
			\node[nst] (v6) at (7.5,1) [label=above:\small{$w$}] {};
				
			\node at (6.25,0) {$\dots$};
			\node at (6.25,1) {$\dots$};	
	
			\path (v1) edge[est] node [below] {\small{$(1-\beta)\varepsilon^2$}} (v2)
			(v2) edge[est] node [below] {\small{$(1-\beta)\varepsilon^2$}} (v3)
			(v3) edge (5.75,0)
			(6.75,0) edge[est] (v4)
			(v4) edge[est] node [below] {\small{$(1-\beta)\varepsilon^2$}} (v5)
			(v1) edge node [right] {\small{$0$}} (0,1)
			(v2) edge node [right] {\small{$0$}} (2.5,1)
			(v3) edge node [right] {\small{$0$}} (5,1)
			(v4) edge[est] node [right] {\small{$0$}} (v6)
			(0,1) edge (5.75,1)
			(6.75,1) edge[est] (v6)
			(v6) edge (10,1)
			(10,1) edge[est] node [right] {\small{$1/\beta$}} (v5);
		
		\end{tikzpicture}
		\vspace*{-0.2cm}
	\caption{Graph maximizing the ratio between the efficiency of subgraphs and cost configurations}\label{fig:ratex}
	\end{figure}
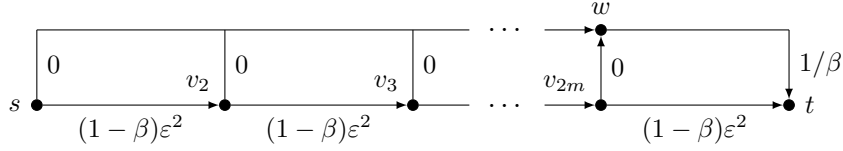	
	
	We proceed to argue that $G$ satisfies (a).
	For the sake of contradiction, assume the existence of a subgraph $G'$ that is motivating for a reward $r < 1/\beta^2$.
	In this case the agent must not take shortcuts as her perceived cost at $w$ exceeds her perceived reward.
	Therefore, she must follow the main path.
	In particular, she must visit each node $v_i$ on the first half of the path, i.e.\ $i \leq m+1$.
	At each of these nodes, her lowest perceived cost is realized along the edge $(v_i,v_{i+1})$.
	Essentially, there are two ways she can come up with this cost.
	First, she might plan to take a shortcut at a later point in time.
	As a result, we get $\eta_{G'}(v_i,v_{i+1}) \geq c(v_i,v_{i+1}) + \beta c(w,t) > 1$.
	Secondly, she might plan to stay on the main path.
	In this case she must traverse at least $m$ edges, each of which contributes $\beta(1-\beta)\varepsilon^2$ or more to $\eta_{G'}(v_i,v_{i+1})$.
	Consequently, we get $\eta_{G'}(v_i,v_{i+1}) \geq m \beta (1-\beta)\varepsilon^2 \geq 1/\beta \geq 1$.
	Either way her perceived cost for taking the main path is at least $1$.
	As this tempts her to take the shortcut at $v_i$, all of the first $m + 1$ shortcuts must be interrupted in $G'$.
	This means she must walk along at least $m$ edges of the main path before taking the first shortcut.
	As a result, her lowest perceived cost at $v_1$ is at least $\zeta_{G'}(v_1) \geq m \beta (1-\beta)\varepsilon^2 \geq 1/\beta$.
	This is a contradiction to the assumption that $r$ is motivating.
	
	Next we show how to construct a cost configuration $\tilde{c}$ that satisfies (b).
	For this purpose it is sufficient to add an extra cost of $\varepsilon$ to all edges $(v_i,w)$.
	To upper bound the agent's perceived cost of $(v_i,v_{i+1})$, assume she plans to take a shortcut in the next step, i.e.\ at $v_{i+1}$.
	For $i < 2m$ we get $\eta_{\tilde{c}}(v_i,v_{i+1}) \leq c(v_i,v_{i+1}) + \beta(\tilde{c}(v_{i+1},w) + c(w,t)) = (1-\beta)\varepsilon^2 + \beta\varepsilon + 1 < 1 + \varepsilon$.
	In the special case of $i = 2m$, the inequality $\eta_{\tilde{c}}(v_i,v_{i+1}) < 1+ \varepsilon$ is still satisfied, this time via the direct edge $(v_{2m},t)$.
	In contrast, the agent's perceived cost of an immediate shortcut is $\eta_{\tilde{c}}(v_i,t) = \varepsilon + \beta c(w,t) = 1 + \varepsilon$ for all $i \leq 2m$.
	Therefore, she is never tempted to divert from the main path.
	Furthermore, a reward of $q = (1+\varepsilon)/\beta$ is sufficient to keep her motivated.
\end{proof}

\section{Computing Motivating Cost Configurations}
\label{sec:mcc}

We now turn our attention to the computational aspects of designing efficient penalty fees.
In this section, we assume that the agent's reward is fixed to some value $r > 0$.
Our goal is to compute cost configurations that are motivating for $r$ whenever they exist.
Similar to prohibition-based commitment devices~\cite{AK}, this task is NP-hard whenever the agent is present biased, i.e.\ $\beta \neq 1$.
We will prove this claim at the end of the section.
But first, assume that we already have partial knowledge of the solution. 
More precisely, assume we know one of the paths the agent might take in a motivating cost configuration provided a motivating cost configuration exists.
We call this path $P$.
Based on $P$, Algorithm~\ref{alg:pathfence} constructs a cost configuration $\tilde{c}$ that is motivating for a slightly larger reward $r + \varepsilon$.

\begin{algorithm}
	\caption{\sc PathAndFence} \label{alg:pathfence}
	\SetKw{KwFrom}{from}
	\SetKw{KwTo}{to}
    \KwIn{Task graph $G$, present bias $\beta$, path $P = v_1,\ldots,v_m$, positive value $\varepsilon$}
    \KwOut{Cost configuration $\tilde{c}$}
	$\tilde{c} \gets \tilde{0}$\;
	\For{$i$ \KwFrom $m-1$ \KwTo $1$}{
		\ForEach{$w \in \{w' \mid (v_i,w') \in E\}$}{
			\lIf{$w\neq v_{i+1}$}{$\tilde{c}(v_i,w) \gets \max\{0, \eta_{\tilde{c}}(v_i,v_{i+1}) - \eta_{\tilde{c}}(v_i,w) + \beta\varepsilon/(m-2)\}$}			
		}
	}
	\Return $\tilde{c}$\;
\end{algorithm}

The basic idea of Algorithm~\ref{alg:pathfence} is simple.
Starting with $v_{m-1}$, it considers all nodes $v_i$ of $P$ in reverse order.
For each $v_i$ it assigns an extra cost of $\max\{0, \eta_{\tilde{c}}(v_i,v_{i+1}) - \eta_{\tilde{c}}(v_i,w) + \beta\varepsilon/(m-2)\}$ to the edges $(v_i,w)$ that leave $P$, i.e.\ edges different from $(v_i,v_{i+1})$.
As a result, the agent's perceived cost of $(v_i,w)$ is greater than that of $(v_i,v_{i+1})$ by at least $\beta\varepsilon/(m-2)$.
Consequently, she has no incentive to divert from $P$ at $v_i$.
Since the algorithm runs in reverse order, extra cost assigned in iteration $i$ has no effect on the agent's behavior at later nodes, i.e.\ nodes $v_j$ with $j > i$.
Figuratively speaking, the algorithm builds a fence of penalty fees along $P$ preventing the agent from leaving $P$. 
For this reason, we call the algorithm {\sc PathAndFence}.
As the next proposition suggests, cost configurations of this particular fence structure can achieve almost the same efficiency as any other cost configuration.
Due to space constraints, refer to the Appendix for a proof.

\begin{proposition} \label{prop:optstruc}
	Let $P$ be the agent's path from $s$ to $t$ with respect to a cost configuration $\tilde{c}^\ast$ that is motivating for a reward $r$.
	{\sc PathAndFence} constructs a cost configuration $\tilde{c}$ that is motivating for a reward of $r + \varepsilon$, where $\varepsilon$ is an arbitrary small but positive quantity.
\end{proposition}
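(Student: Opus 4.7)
The plan is to establish Proposition~\ref{prop:optstruc} in three stages. First, I verify that \textsc{PathAndFence} outputs a valid cost configuration in polynomial time: every assignment is of the form $\max\{0,\cdot\}$, so extras are non-negative, and each edge is examined at most once. Second, I argue that the agent traverses $P$ in $\tilde{c}$. The loop processes $v_{m-1},\ldots,v_{1}$ in reverse, and at iteration $i$ only edges outgoing from $v_i$ are touched, so $d_{\tilde{c}}(w)$ for every other node is already final when iteration $i$ begins. By the algorithm's assignment, after iteration $i$ every outgoing edge $(v_i,w)$ with $w\neq v_{i+1}$ satisfies $\eta_{\tilde{c}}(v_i,w)\geq \eta_{\tilde{c}}(v_i,v_{i+1})+\beta\varepsilon/(m-2)$, so $(v_i,v_{i+1})$ is the strict minimizer of perceived cost at $v_i$. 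Hence the agent's only option at each $v_i$ is to walk to $v_{i+1}$, and she traverses exactly $P$.

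Third, I bound the agent's lowest perceived cost at every node on $P$ by $\beta(r+\varepsilon)$. Because she follows $P$, $\zeta_{\tilde{c}}(v_i)=\eta_{\tilde{c}}(v_i,v_{i+1})=c(v_i,v_{i+1})+\beta d_{\tilde{c}}(v_{i+1})$. Using the hypothesis that $\tilde{c}^{\ast}$ is motivating and that the agent also follows $P$ in $\tilde{c}^{\ast}$, I obtain
\[
c(v_i,v_{i+1})+\tilde{c}^{\ast}(v_i,v_{i+1})+\beta d_{\tilde{c}^{\ast}}(v_{i+1})=\zeta_{\tilde{c}^{\ast}}(v_i)\leq \beta r,
\]
and dropping the non-negative $\tilde{c}^{\ast}(v_i,v_{i+1})$ term shows that it suffices to prove the comparison $d_{\tilde{c}}(v_{i+1})\leq d_{\tilde{c}^{\ast}}(v_{i+1})+\varepsilon$. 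To get this I would use a reverse induction along $P$ with the amortized invariant
\[
d_{\tilde{c}}(v_j)\leq d_{\tilde{c}^{\ast}}(v_j)+(m-1-j)\cdot\varepsilon/(m-2),
\]
so that each of the $m-2$ intermediate iterations contributes at most $\varepsilon/(m-2)$ of cumulative slack while the base case $v_m=t$ gives $d_{\tilde{c}}(v_m)=0=d_{\tilde{c}^{\ast}}(v_m)$. Combining the comparison with $c(v_i,v_{i+1})+\beta d_{\tilde{c}^{\ast}}(v_{i+1})\leq \beta r$ then yields $\zeta_{\tilde{c}}(v_i)\leq\beta r+\beta\varepsilon=\beta(r+\varepsilon)$, certifying that the agent does not abandon.

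The main obstacle is precisely this last comparison. The algorithm's update at iteration $j$ depends on \emph{current} values $d_{\tilde{c}}(v_{j+1})$ and $d_{\tilde{c}}(w)$, which have themselves been shaped by earlier iterations; at the same time $d_{\tilde{c}^{\ast}}(v_j)$ may be realized along a path very different from $P$, for instance a shortcut that $\tilde{c}^{\ast}$ penalizes but $\tilde{c}$ does not, or vice versa. The induction therefore has to relate two distinct shortest-path trees in two different cost configurations, and I expect the technical heart of the proof to lie here. I would attempt this either by producing an explicit comparison path from $v_j$ to $t$ that is cheap in both configurations, or by tracking the perceived-cost increments iteration by iteration so that the non-$P$ penalties of the algorithm telescope to at most $\beta\varepsilon$ in aggregate, matching the quantity of slack built into the factor $\beta\varepsilon/(m-2)$ in the algorithm's design.
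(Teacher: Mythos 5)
Your setup (the agent follows $P$ because every penalized edge exceeds $(v_i,v_{i+1})$ by $\beta\varepsilon/(m-2)$, and the reverse order makes iterations independent) and your reduction of motivation to the distance comparison $d_{\tilde{c}}(v_{i+1})\leq d_{\tilde{c}^\ast}(v_{i+1})+\varepsilon$ match the paper's strategy, which runs the same reverse induction along $P$ with a per-node slack budget of order $\varepsilon/(m-2)$. However, the proposal stops exactly where the real work begins: you state the invariant $d_{\tilde{c}}(v_j)\leq d_{\tilde{c}^\ast}(v_j)+(m-1-j)\varepsilon/(m-2)$, acknowledge that relating the two shortest-path structures is ``the technical heart,'' and then only name two possible strategies without executing either. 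That inductive step is not routine bookkeeping, so as it stands there is a genuine gap.

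The missing argument, as the paper carries it out, has three ingredients you would need to supply. First, observe that {\sc PathAndFence} assigns extra cost only to edges leaving nodes of $P$; hence if you take a $\tilde{c}^\ast$-cheapest path $P'$ from $v_i$ and let $v_j$ be its first intersection with $P$ after $v_i$, the segment of $P'$ strictly between its first edge and $v_j$ has identical cost under $\tilde{c}$ and $\tilde{c}^\ast$, so $d_{\tilde{c}}(w)\leq d_{\tilde{c}^\ast}(w)+\text{(slack at }v_j\text{)}$ for the second node $w$ of $P'$, by the induction hypothesis at $v_j$. Second, if the first edge $(v_i,w)$ of $P'$ received no extra cost, the bound for $d_{\tilde{c}}(v_i)$ follows directly. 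Third --- and this is the case your two sketched strategies do not obviously handle --- if $(v_i,w)$ \emph{was} penalized, a naive comparison fails because $\tilde{c}(v_i,w)$ can be large; instead one uses the exact-margin identity $\eta_{\tilde{c}}(v_i,w)=\eta_{\tilde{c}}(v_i,v_{i+1})+\beta\varepsilon/(m-2)$, writes $d_{\tilde{c}}(v_i)\leq \eta_{\tilde{c}}(v_i,w)+(1-\beta)d_{\tilde{c}}(w)$, bounds $\eta_{\tilde{c}}(v_i,v_{i+1})$ via the induction hypothesis at $v_{i+1}$, and finally invokes the $\tilde{c}^\ast$-optimality of $(v_i,v_{i+1})$ at $v_i$, i.e.\ $\eta_{\tilde{c}^\ast}(v_i,v_{i+1})\leq\eta_{\tilde{c}^\ast}(v_i,w)$, to convert the result back into $d_{\tilde{c}^\ast}(v_i)=\eta_{\tilde{c}^\ast}(v_i,w)+(1-\beta)d_{\tilde{c}^\ast}(w)$ plus the allotted slack. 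Without this case analysis (or an equivalent telescoping argument actually carried through), the invariant is asserted rather than proved, and the proposition does not yet follow.
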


Proposition~\ref{prop:optstruc} has some interesting implications.
The first one is of conceptual nature.
Note that {\sc PathAndFence} constructs a cost configuration that never actually charges the agent any extra cost.
This suggests the existence of efficient penalty-based commitment devices that do not require the designer to enforce penalties.
The mere threat of repercussions appears to be sufficient.
The second implication is computational.
Clearly, {\sc PathAndFence} runs in polynomial-time with respect to $n$.
In particular, the number of iterations does not depend on the choice of $\varepsilon$.
Consequently, {\sc PathAndFence} can be combined with an exhaustive search algorithm that considers all paths from $s$ to $t$ to search for a motivating cost configuration.
Although the number of such paths can be exponential in $n$, this approach still reduces the size of the search space considerably.
Finally, it should be noted that a similar result for commitment devices based on prohibition is unlikely to exist.
The reason is that subgraphs remain hard to approximate even if the agent's optimal path is known~\cite{AK}, indicating a favorable computational complexity for the design of penalty fees.
Of course there is another potential source of hardness: the computation of $P$.
To prove that this is a limiting factor, we introduce the decision problem MOTIVATING COST CONFIGURATION:

\begin{definition}[MCC]
	Given a task graph $G$, a reward $r > 0$ and a present bias $\beta \in (0,1]$, decide the existence of a motivating cost configuration.
\end{definition}

We propose a reduction from $3$-SAT to show that MCC is NP-complete for arbitrary $\beta \in (0,1)$.
At a later point we will use the same reduction to establish a hardness of approximation result.

\begin{theorem}\label{th:npcomp}
	MCC is NP-complete for any present bias $\beta\in(0,1)$. 
\end{theorem}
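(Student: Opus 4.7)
The plan is to split the argument into NP-membership and NP-hardness. For NP-membership I would use the agent's path from $s$ to $t$ as a polynomial-size certificate: by Proposition~\ref{prop:optstruc}, whenever a motivating cost configuration exists there is a path $P$ such that {\sc PathAndFence}, applied to $P$, produces a cost configuration that is motivating for $r$ up to arbitrarily small additive slack. The verifier simply runs {\sc PathAndFence} on $P$, computes $d_{\tilde{c}}$ by a single reverse shortest-path pass, and checks $\zeta_{\tilde{c}}(v) \leq \beta r$ at every node of $P$. Since all inputs are rationals of polynomial bit length, the computed quantities are as well; the $\beta\varepsilon/(m-2)$ slack in Proposition~\ref{prop:optstruc} is absorbed by taking $\varepsilon = 0$ and using that ties at equal-perceived-cost edges may be broken in favour of $P$.

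For NP-hardness I would reduce from 3-SAT. Given a formula $\phi$ on variables $x_1,\ldots,x_n$ and clauses $C_1,\ldots,C_m$, I construct a task graph $G_\phi$ by stringing together two types of gadgets. For each variable $x_i$ there is a small diamond gadget with parallel arcs labelled \emph{true} and \emph{false}, calibrated so that any motivating configuration must place essentially all of its penalty weight for $x_i$ on exactly one of the two arcs, thereby recording a truth value. For each clause $C_j$ there is a gadget placed after the variable chain containing a tempting shortcut toward $t$; the gadget is designed so that the shortcut's perceived cost can be raised above $\beta r$ only by penalties lying on variable arcs corresponding to literals that satisfy $C_j$. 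Together these constraints force any motivating configuration to encode a truth assignment that leaves, for every clause, at least one satisfying literal unpenalised---i.e.\ to satisfy $\phi$. Conversely, any satisfying assignment yields an explicit motivating configuration by putting a calibrated penalty on each arc opposite to the assigned value. The restriction $\beta < 1$ is essential: for $\beta = 1$ the agent is rational, no shortcut is tempting, and MCC collapses to a trivial shortest-path check.

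The main obstacle will be designing the clause gadget, because a penalty placed on a single variable arc shifts $d$-values throughout the graph and therefore influences the temptation at \emph{every} downstream clause. I would decouple these global effects by routing each variable arc through a long ``service branch'' that visits only the clauses containing the corresponding literal before rejoining the main trunk, and by tuning edge costs so that the perceived cost at each clause's tempting node depends only on penalties on literals satisfying that particular clause. Producing balances that yield the two-sided behaviour---motivating for $r$ iff the encoded assignment satisfies every clause simultaneously---is where the combinatorial work concentrates. Once the gadget is specified, both directions reduce to mechanical case analysis, and the construction has size polynomial in $n+m$, completing the reduction.
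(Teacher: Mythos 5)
There are two genuine gaps. First, your NP-membership argument does not go through as stated. \textsc{PathAndFence} applied to the agent's path only yields a configuration motivating for $r+\varepsilon$ (Proposition~\ref{prop:optstruc}), which does not certify a ``yes''-instance for the exact reward $r$; and your repair --- taking $\varepsilon=0$ and ``breaking ties in favour of $P$'' --- conflicts with the model, which declares a graph motivating only if the agent abandons on \emph{no} path consistent with arbitrary tie-breaking. With $\varepsilon=0$ the fence makes off-path edges merely tie with the path edges, so the agent may leave $P$ and subsequently abandon, and the verifier's check along $P$ proves nothing. (Blocking off-path edges with huge penalties instead is also not safe, since that can raise downstream $d$-values and hence the perceived cost on $P$ itself --- avoiding exactly this is the point of the fence's calibrated costs.) The paper sidesteps all of this by using the motivating cost configuration itself as the certificate and invoking the known polynomial-time test for whether a fixed task graph is motivating.

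Second, and more importantly, the NP-hardness part is a plan rather than a proof: the clause gadget, which carries all of the combinatorial content, is explicitly left undesigned. Moreover, the mechanism you describe is oriented the wrong way. Perceived cost at a node depends only on edges \emph{reachable} from that node, so penalties placed on variable arcs that lie upstream of a clause gadget cannot raise the perceived cost of a shortcut at that clause; your ``service branch'' device would have to recreate forward reachability from every clause to the relevant variable arcs, which is precisely the nontrivial construction you have not supplied. The paper's reduction resolves this by ordering the graph with literal (clause) nodes \emph{before} variable nodes: to keep the agent from exiting at a satisfied variable node later, the designer must penalise that node's shortcut to $t$, and this penalty feeds back (via discounted downstream cost) into the plans the agent forms earlier at the literal nodes, forcing her route to use only satisfied literals. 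Until you specify concrete edge costs realising a correspondence of this kind and verify both directions, the reduction --- and hence the theorem --- is not established.
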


\begin{proof}
	According to~\cite{AK}, whether or not a given task graph is motivating for a fixed reward can be verified in polynomial-time.
	Of course, this remains valid if the edges are assigned extra cost.
	Consequently, any motivating cost configuration is a suitable certificate for a "yes"-instance of MCC.
	We conclude that MCC is in NP.
	In the following, we present a reduction from $3$-SAT to show that MCC is also NP-hard.
	This establishes the theorem.

	Let ${\cal I}$ be an arbitrary instance of $3$-SAT consisting of $\ell$ clauses $c_1,\ldots,c_\ell$ over $m$ variables $x_1,\ldots,x_m$.
	We construct a MCC instance ${\cal J}$ such that its task graph $G$ admits a motivating cost configuration for a reward of $r=1/\beta$ if and only if ${\cal I}$ has a satisfying variable assignment. 
	Figure~\ref{fig:3sat} depicts $G$ for a small sample instance of ${\cal I}$.
	In general, $G$ consists of a source $s$, a target $t$ and five nodes $u_1,\ldots,u_5$.
	Depending on ${\cal I}$, $G$ also contains some extra nodes.
	For each variable $x_k$, there are two {\em variable nodes\/} $w_{k,T}$ and $w_{k,F}$.
	The idea is to interpret $x_k$ as true whenever the agent visits $w_{k,T}$ and as false whenever she visits~$w_{k,F}$.
	As a result, the agent's walk through $G$ yields a variable assignment $\tau$.
	Furthermore, for each clause $c_i$ there is a {\em literal node\/} $v_{i,j}$ corresponding to the $j$-th literal of $c_i$.
	Our goal is to construct $G$ in such a way that every motivating cost configuration guides the agent along literal nodes that are satisfied with respect to $\tau$.
	
	\begin{figure}[t]
\center
		\begin{tikzpicture}[nst/.style={draw,circle,fill=black,minimum size=4pt,inner sep=0pt]}, est/.style={draw,>=latex,->}]
			
			\node[nst] (v1) at (1.25,0) [label=left:\small{$s$}] {};
			\node[nst] (v2) at (2.5,0.5) [label=left:\small{$v_{1,1}$}] {};
			\node[nst] (v3) at (3.75,0.5) [label=left:\small{$v_{1,2}$}] {};
			\node[nst] (v4) at (5,0.5) [label=left:\small{$v_{1,3}$}] {};
			\node[nst] (v5) at (6.25,0.5) [label=left:\small{$v_{2,1}$}] {};
			\node[nst] (v6) at (7.5,0.5) [label=left:\small{$v_{2,2}$}] {};
			\node[nst] (v7) at (8.75,0.5) [label=left:\small{$v_{2,3}$}] {};
			\node[nst] (v8) at (10,0.5) [label=left:\small{$v_{3,1}$}] {};
			\node[nst] (v9) at (11.25,0.5) [label=left:\small{$v_{3,2}$}] {};
			\node[nst] (vx) at (12.5,0.5) [label=left:\small{$v_{3,3}$}] {};
			\node[nst] (v10) at (1.25,3.5) [label=left:\small{$w_{1,T}$}] {};
			\node[nst] (v11) at (2.5,3.5) [label=left:\small{$w_{1,F}$}] {};
			\node[nst] (v12) at (3.75,3.5) [label=left:\small{$w_{2,T}$}] {};
			\node[nst] (v13) at (5,3.5) [label=left:\small{$w_{2,F}$}] {};
			\node[nst] (v14) at (6.25,3.5) [label=left:\small{$w_{3,T}$}] {};
			\node[nst] (v15) at (7.5,3.5) [label=left:\small{$w_{3,F}$}] {};
			\node[nst] (vy) at (9.375,2) [label=below:\small{$u_1$}] {};
			\node[nst] (vz) at (8.125,2) [label=below:\small{$u_2$}] {};
			\node[nst] (v16) at (8.75,4) [label=below:\small{$u_3$}] {};
			\node[nst] (v17) at (10,4) [label=below:\small{$u_4$}] {};
			\node[nst] (v18) at (11.25,4) [label=below:\small{$u_5$}] {};
			\node[nst] (v19) at (12.5,4) [label=right:\small{$t$}] {};
			
			\path (v1) edge node [below] {\small{$(1-\beta)^3-\varepsilon$}} (5,0)
			(2.5,0) edge[est] (v2)
			(3.75,0) edge[est] (v3)
			(5,0) edge[est] (v4)
			(v2) edge (2.5,1)
			(v3) edge (3.75,1)
			(v4) edge (5,1)
			(2.5,1) edge (5.3125,1)
			(5.3125,1) edge (5.3125,0)
			(5.3125,0) edge node [below] {\small{$(1-\beta)^3-\varepsilon$}} (8.75,0)
			(6.25,0) edge[est] (v5)
			(7.5,0) edge[est] (v6)
			(8.75,0) edge[est] (v7)
			(v5) edge (6.25,1)
			(v6) edge (7.5,1)
			(v7) edge (8.75,1)
			(6.25,1) edge (9.0625,1)
			(9.0625,1) edge (9.0625,0)
			(9.0625,0) edge node [below] {\small{$(1-\beta)^3-\varepsilon$}} (12.5,0)
			(10,0) edge[est] (v8)
			(11.25,0) edge[est] (v9)
			(12.5,0) edge[est] (vx)
			(v8) edge (10,1)
			(v9) edge (11.25,1)
			(vx) edge (12.5,2)
			(10,1) edge (12.5,1)
			(12.5,2) edge[est] node [above] {\small{$(1-\beta)^3-\varepsilon$}} (vy)
			(vy) edge[est] node [above] {\small{$(1-\beta)^2$}} (vz)
			(vz) edge (1.25,2)
			(1.25,2) edge node [above] {\small{$(1-\beta)^3-\varepsilon$}} (4.375,2)
			(1.25,2) edge[est] (v10)
			(1.25,3) edge (2.5,3)
			(2.5,3) edge[est] (v11)
			(v10) edge (1.25,4)
			(v11) edge (2.5,4)
			(1.25,4) edge node [above] {\small{$(1-\beta)^3-\varepsilon$}} (2.8125,4)
			(2.8125,4) edge (2.8125,3)
			(2.8125,3) edge (5,3)
			(3.75,3) edge[est] (v12)
			(5,3) edge[est] (v13)
			(v12) edge (3.75,4)
			(v13) edge (5,4)
			(3.75,4) edge node [above] {\small{$(1-\beta)^3-\varepsilon$}} (5.3125,4)
			(5.3125,4) edge (5.3125,3)
			(5.3125,3) edge (7.5,3)
			(6.25,3) edge[est] (v14)
			(7.5,3) edge[est] (v15)
			(v14) edge (6.25,4)
			(v15) edge (7.5,4)
			(6.25,4) edge node [above] {\small{$(1-\beta)^3-\varepsilon$}} (v16)
			(v16) edge[est] node [above] {\small{$(1-\beta)^2$}} (v17)
			(v17) edge[est] node [above] {\small{$(1-\beta)$}} (v18)
			(v18) edge[est] node [above] {\small{$1$}} (v19)
			(v2) edge[est,dashed] (v10)
			(v3) edge[est,dashed] (v13)
			(v4) edge[est,dashed] (v15)
			(v5) edge[est,dashed] (v11)
			(v6) edge[est,dashed] (v12)
			(v7) edge[est,dashed] (v14)
			(v8) edge[est,dashed] (v11)
			(v9) edge[est, dashed] (v12)
			(vx) edge[est, dashed] (v15);
		\end{tikzpicture}
\caption{Reduction from the 3-SAT instance: $(\bar{x}_1\vee x_2\vee x_3)\wedge(x_1\vee \bar{x}_2 \vee \bar{x}_3)\wedge(x_1 \vee \bar{x}_2 \vee x_3)$}\label{fig:3sat}
\end{figure}
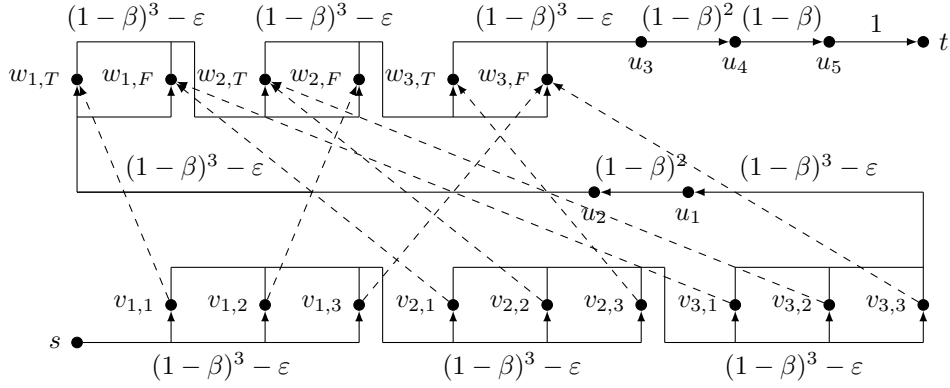

	All nodes $v_{i,j}$ and $w_{k,y}$ are connected via so-called {\em forward edges\/}.
	More specifically, for all $1\leq i<\ell$ and $1\leq j,j'\leq3$ there is a forward edge from $v_{i,j}$ to $v_{i+1,j'}$.
	Similarly, there is a forward edge from $w_{k,y}$ to $w_{k+1,y'}$ for all $1\leq k<m$ and $y,y' \in \{T,F\}$. 
	We also have forward edges from $s$ to each $v_{1,j}$, from each $v_{\ell,j}$ to $u_1$, from $u_2$ to each $w_{1,y}$ and from each $w_{m,y}$ to $u_3$.
	For the sake of readability, some forward edges are merged in Figure~\ref{fig:3sat}.
	The price of each forward edge is $(1-\beta)^3-\varepsilon$, where the encoding length of $\beta$ is assumed to be polynomial in ${\cal I}$.
	Furthermore, $\varepsilon$ denotes a small but positive quantity such that
	\[\varepsilon < \min\Bigl\{(1-\beta)^2, \frac{\beta(1-\beta)^3}{1+\beta}, \frac{\beta(1-\beta)^2}{1+\beta}\Bigr\}.\]
	In addition to the forward edges, there are three types of {\em shortcuts\/}.
	The first type, which is depicted as dashed edges in Figure~\ref{fig:3sat}, connects each literal node $v_{i,j}$ to a distinct variable node via a single edge of cost $(1-\beta)^2$.
	If the $j$-th literal of $c_i$ is equal to $x_k$, the shortcut goes to~$w_{k,F}$.
	Otherwise, if the literal is negated, i.e.\ $\bar{x}_k$, the shortcut goes to~$w_{k,T}$.
	The second type of shortcut goes from $u_2$ to $t$ along a single edge of cost $2-\beta$.
	For clear representation, this shortcut is omitted in Figure~\ref{fig:3sat}.
	The third type of shortcut connects each variable node $w_{k,y}$ to $t$ via a distinct intermediate node.
	The first edge is free while the second costs $2-\beta$.
	Again, shortcuts of this type are omitted in Figure~\ref{fig:3sat} to keep the drawing simple.
	Finally, there are four more edges $(u_1,u_2)$, $(u_3,u_4)$, $(u_4,u_5)$ and $(u_5,t)$ of cost $(1-\beta)^2$, $(1-\beta)^2$, $1-\beta$ and $1$ respectively.
	Note that $G$ is acyclic and its encoding length is polynomial in $\mathcal{I}$.
	
	To establish the theorem, we must show that ${\cal I}$ has a satisfying variable assignment if and only if ${\cal J}$ has a motivating cost configuration.
	A detailed argument is described in the Appendix.
	At this point we only sketch the main ideas.
	For this purpose let $\tilde{c}$ be a cost configuration that is motivating for a reward of~$1/\beta$ and let $P$ be the agent's path through $G$ with respect to $\tilde{c}$.
	Note that $P$ cannot contain shortcuts of the second or third type as their edges are too expensive.
	Furthermore, $P$ cannot contain a shortcut of the first type because the agent either perceives it as too expensive or is tempted to enter a shortcut of the third type immediately afterwards.
	As a result, $P$ contains exactly one of the two nodes $w_{k,T}$ and $w_{k,F}$ for each variable $x_k$.
	Let $\tau : \{x_1,\ldots,x_m\} \to \{T,F\}$ be the corresponding variable assignment.
	To keep the agent on $P$, $\tilde{c}$ must assign extra cost to all shortcuts that start at a variable node satisfied by $\tau$.
	However, this raises the perceived cost of all paths via literal nodes not satisfied by $\tau$ to values that are not motivating.
	Consequently, $P$ cannot contain such literal nodes.
	But $P$ must contain exactly one literal node of each clause because $P$ takes no shortcuts.
	This means that $\tau$ satisfies at least one literal in each clause and is therefore a feasible solution of ${\cal I}$.
	Conversely, whenever ${\cal I}$ has a feasible solution $\tau$, we can construct a motivating cost configuration $\tilde{c}$ as follows:
	First, assign an appropriate extra cost, e.g.\ $(1-\beta)^2$, to the shortcuts of type three starting at the variable nodes $w_{k,\tau(x_k)}$.
	Secondly, block the forward edges into the variable nodes $w_{k,\tau(\bar{x}_k)}$ with high extra cost of e.g.\ $1$.
\end{proof}

\section{Approximating Motivating Cost Configurations}
\label{sec:optmcc}

The previous section showed that optimal penalty-based commitment devices are NP-hard to design.
This section therefore focuses on an optimization version of the problem.
Our goal is to construct cost configurations that require the designer to raise the reward at $t$ as little as possible.
However, before we provide a formal definition of the problem we should consider a curious technical detail; namely, not all task graphs admit an optimal cost configuration.

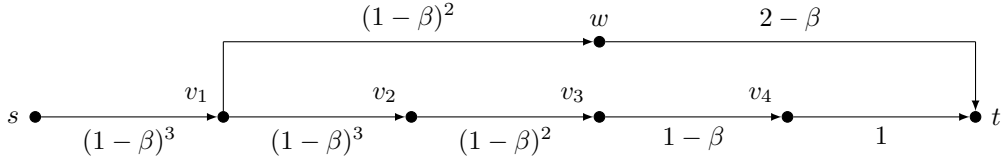
\begin{figure}[t]
\center
	\begin{tikzpicture}[nst/.style={draw,circle,fill=black,minimum size=4pt,inner sep=0pt]}, est/.style={draw,>=latex,->}]
			
		\node[nst] (v1) at (0,0) [label=left:\small{$s$}] {};
		\node[nst] (v2) at (2.5,0) [label=above left:\small{$v_1$}] {};
		\node[nst] (v3) at (5,0) [label=above left:\small{$v_2$}] {};
		\node[nst] (v4) at (7.5,0) [label=above left:\small{$v_3$}] {};
		\node[nst] (v5) at (10,0) [label=above left:\small{$v_4$}] {};
		\node[nst] (v6) at (12.5,0) [label=right:\small{$t$}] {};
		\node[nst] (v7) at (7.5,1) [label=above:\small{$w$}] {};
		
		\path (v1) edge[est] node [below] {\small{$(1-\beta)^3$}} (v2)
		(v2) edge[est] node [below] {\small{$(1-\beta)^3$}} (v3)
		(v3) edge[est] node [below] {\small{$(1-\beta)^2$}} (v4)
		(v4) edge[est] node [below] {\small{$1-\beta$}} (v5)
		(v5) edge[est] node [below] {\small{$1$}} (v6)
		(v2) edge (2.5,1)
		(2.5,1) edge[est] node [above] {\small{$(1-\beta)^2$}} (v7)
		(v7) edge node [above] {\small{$2-\beta$}} (12.5,1)
		(12.5,1) edge[est] (v6);
		
	\end{tikzpicture}
	\vspace*{-0.2cm}
\caption{Task graph with no optimal cost configuration}\label{fig:noopt}
\end{figure}

Consider, for instance, the task graph in Figure~\ref{fig:noopt}.
At $v_1$ the agent is indifferent between the edges $(v_1,v_2)$ and $(v_1,w)$.
In both cases her perceived cost is $1$.
If she chooses $(v_1,w)$, she faces a perceived cost of $2 - \beta$ at $w$.
Conversely, if she chooses $(v_1,v_2)$, she perceives a cost of $1$ at $v_2$, $v_3$ and $v_4$.
Assuming that $\beta < 1$, $(v_1,v_2)$ is the better choice.
To break the tie between $(v_1,w)$ and $(v_1,v_2)$ we must place a positive extra cost of $\varepsilon$ onto the upper path.
However, when located at $s$ the agent's perceived cost of the upper path is $1 + \beta\varepsilon$.
In contrast, her perceived cost of the lower path is $1 + \beta(1-\beta)^3$.
Assuming that $\varepsilon < (1-\beta)^3$, she prefers the upper path.
Consequently, we can construct a cost configuration that is motivating for a reward arbitrarily close to $1/\beta$, but no cost configuration is motivating for a reward of exactly $1/\beta$.
To account for the potential lack of an optimal solution, we compare our results to the infimum of all rewards that admit a motivating cost configuration.
The optimization problem MCC-OPT is defined accordingly:

\begin{definition}[MCC-OPT]
	Given a task graph $G$ and a present bias $\beta \in (0,1)$, determine the infimum of all rewards for which a motivating cost configuration exists.
\end{definition}

\begin{algorithm}[h]
	\caption{\sc MinMaxPathApprox} \label{alg:mmpaprox}
    \KwIn{Task graph $G$, present bias $\beta$}
    \KwOut{Cost configuration $\tilde{c}$}
    $P \gets \text{minmax path from } s \text{ to } t  \text{ with respect to } \eta_{\tilde{0}}$\;
    $\varrho \gets \max\{\eta_{\tilde{0}}(e) \mid e \in P\}$\;
    \ForEach{$v \in V\setminus\{t\}$}{$\varsigma(v) \gets \text{successor node of } v \text{ on a cheapest path from } v \text{ to } t$\;}
	\ForEach{$(v,w) \in E$}{
		\lIf{$(v,w) \in P \lor (\varsigma(v) = w \land v \notin P)$}{$\tilde{c}(v,w) \gets 0$}
		\lElseIf{$(v,w) \neq P \land \varsigma(v) \neq w$}{$\tilde{c}(v,w) \gets 3\varrho/\beta$}
		\Else{
			$P' \gets v,\varsigma(v),\varsigma(\varsigma(v)),\ldots,t$\;
			$u \gets \text{first node of } P' \text{ different from } v \text{ that is also a node of } P$\;
			$e \gets \text{most expensive edge of } P', \text{ between } v \text{ and } u$\;
			$\tilde{c}(v,w) \gets c(e)$;
		}
	}
	\Return $\tilde{c}$\;
\end{algorithm}

We are now ready to introduce Algorithm~\ref{alg:mmpaprox}.
This algorithm enables us to construct cost configurations that approximate MCC-OPT within a factor of $2$.
At a high level, the algorithm proceeds in two phases.
First, it computes a value $\varrho$ such that $\varrho/\beta$ is a lower bound for any reward that admits a motivating cost configuration.
Secondly, it constructs a cost configuration $\tilde{c}$ that is motivating for a reward of $2\varrho/\beta$.
This yields the promised approximation ratio of $2$.

For a more detailed discussion of Algorithm~\ref{alg:mmpaprox} assume that each edge $e$ is labeled with its perceived cost~$\eta_{\tilde{0}}(e)$.
Furthermore, let $\tilde{c}'$ be an arbitrary cost configuration and $P'$ the agent's corresponding path from $s$ to $t$.
Our goal is to lower bound the minimum reward that is motivating for $\tilde{c}'$ by some value $\varrho/\beta$.
For this purpose, it is instructive to observe that any motivating reward must be at least $\max\{\eta_{\tilde{c}'}(e) \mid e \in P'\}/\beta \geq \max\{\eta_{\tilde{0}}(e) \mid e \in P'\}/\beta$.
Since $P'$ can be an arbitrary path from $s$ to $t$, we set 
\[\varrho = \min\bigl\{\max\{\eta_{\tilde{0}}(e) \mid e \in P\} \bigm| P \text{ is a path from } s \text{ to } t\bigr\}.\]
In other words, $\varrho$ is the maximum edge cost of a {\em minmax path\/} $P$ from $s$ to $t$ with respect to $\eta_{\tilde{0}}$.
Note that $P$ can be computed in polynomial-time by adding the edges of $G$ in non-decreasing order of perceived cost to an initially empty set $E'$ until $s$ and $t$ become connected for the first time.
Any path from $s$ to $t$ that only uses edges of $E'$ is a suitable minmax path.

We continue with the construction of $\tilde{c}$.
To facilitate this task, Algorithm~\ref{alg:mmpaprox} sets up a cheapest path successor relation $\varsigma$.
More precisely, it assigns a distinct successor node $\varsigma(v)$ to each $v \in V \setminus \{t\}$.
The successor is chosen in such a way that $(v,\varsigma(v))$ is the initial edge of a cheapest path from $v$ to $t$.
Since we may assume that $t$ is reachable from each node of $G$, all $v \neq t$ must have at least one suitable successor.
By construction of $\varsigma$, any path $P' = v,\varsigma(v),\varsigma(\varsigma(v)),\ldots,t$ is a cheapest path from $v$ to $t$.
We call $P'$ the {\em $\varsigma$-path\/} of $v$.

Once $\varsigma$ has been created, Algorithm~\ref{alg:mmpaprox} starts to assign an appropriate extra cost to all edges of $G$.
The idea behind this assignment is to either keep the agent on $P$ or guide her along a suitable $\varsigma$-path.
For this reason, we also call the algorithm {\sc MinMaxPathApprox}.
While iterating through the edges $(v,w)$ of $G$ the algorithm distinguishes between three types of edges:
First, $(v,w)$ might be an edge of $P$ or an edge of a $\varsigma$-path.
In the latter case $v$ must not be a node of~$P$.
Any $(v,w)$ that satisfies these requirements is an edge we want the agent to traverse or use in her plans.
Consequently, $(v,w)$ is assigned no extra cost.
Secondly, $(v,w)$ might neither be an edge of $P$ nor of a $\varsigma$-path.
Since we do not want the agent to traverse or plan along such an edge, the algorithm assigns an extra cost of $3\varrho/\beta$ to $(v,w)$.
This is sufficiently expensive for the agent to lose interest in $(v,w)$ provided that the reward is $2\varrho/\beta$.
Thirdly, $(v,w)$ might not be an edge of $P$ but of a $\varsigma$-path such that $v$ is a node of~$P$.
This is the most involved case.
To find an appropriate cost for $(v,w)$, the algorithm considers the $\varsigma$-path $P'$ of $v$.
Let $u$ be the first common node between $P$ and $P'$ that is different from $v$.
Because $P$ and $P'$ both end in $t$, such a node must exist.
Moreover, let $e$ be the most expensive edge of $P'$ between $v$ and $u$.
The algorithm assigns an extra cost of $c(e)$ to~$(v,w)$.
As we will show in Theorem~\ref{th:2apx}, this cost is either high enough to keep the agent on $P$ or she travels to $u$ along $P'$ without encountering edges that are too expensive.

Clearly, Algorithm~\ref{alg:mmpaprox} can be implemented to run in polynomial-time with respect to the size of $G$.
It remains to show that the algorithm returns a cost configuration $\tilde{c}$ that approximates MCC-OPT within a factor of $2$.

\begin{theorem} \label{th:2apx}
	{\sc MinMaxPathApprox} has an approximation ratio of $2$.
\end{theorem}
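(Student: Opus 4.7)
The plan is to sandwich MCC-OPT with matching bounds that are both multiples of $\varrho$. For the lower bound $\text{MCC-OPT} \geq \varrho/\beta$, fix any motivating cost configuration $\tilde c'$ with agent path $P'$: because $\tilde c'$ only raises costs, $\eta_{\tilde c'}(e) \geq \eta_{\tilde 0}(e)$ for every $e$, and the minmax definition of $\varrho$ forces the motivating reward to be at least $\max_{e \in P'}\eta_{\tilde 0}(e)/\beta \geq \varrho/\beta$. It therefore suffices to show the output $\tilde c$ of {\sc MinMaxPathApprox} is motivating for $r=2\varrho/\beta$, i.e.\ $\zeta_{\tilde c}(v)\leq 2\varrho$ at every node the agent could visit under $\tilde c$. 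Case-2 edges carry $\tilde c(v,w)=3\varrho/\beta$, giving $\eta_{\tilde c}(v,w)\geq 3\varrho>2\varrho$; hence the agent will never select a case-2 edge when a case-1 or case-3 alternative exists, which at any $v\notin P$ forces her to follow the $\varsigma$-edge and hence the $\varsigma$-path until it re-enters $P$.

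The heart of the proof is a telescoping lemma: for every $w\in V$,
\[ d_{\tilde c}(w) \;\leq\; 2\,d_{\tilde 0}(w). \]
To prove this, upper bound $d_{\tilde c}(w)$ by the $\tilde c$-cost of the $\varsigma$-path from $w$, whose $\tilde 0$-cost is exactly $d_{\tilde 0}(w)$. The only extras are the case-3 penalties $c(e^*_u)$ at the $P$-nodes $u$ on this path with $\varsigma(u)\neq\text{next}_P(u)$, and by construction $e^*_u$ is the most expensive edge on the $\varsigma$-subpath from $u$ to its first return $u'$ on $P$. Hence $c(e^*_u)\leq d_{\tilde 0}(u)-d_{\tilde 0}(u')$; since the subpaths associated with distinct case-3 nodes occupy disjoint edges of the $\varsigma$-path, summing these drops telescopes to at most $d_{\tilde 0}(w)$, and adding this to $d_{\tilde 0}(w)$ yields the bound.

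With the lemma the $P$-node case is immediate: for $v\in P$, using $(v,\text{next}_P(v))\in P$ and the minmax property of $\varrho$,
\begin{align*}
\eta_{\tilde c}(v,\text{next}_P(v))
&\leq c(v,\text{next}_P(v))+2\beta\,d_{\tilde 0}(\text{next}_P(v))\\
&= \eta_{\tilde 0}(v,\text{next}_P(v)) + \beta\,d_{\tilde 0}(\text{next}_P(v)) \;\leq\; \varrho+\varrho = 2\varrho,
\end{align*}
where both summands are at most $\varrho$. For $v$ on a $\varsigma$-detour $v_0,v_1,\ldots,v_k$ entered at $v_0\in P$ via a case-3 edge and ending at $u_0=v_k\in P$, every edge has cost at most $c(e^*_{v_0})$; combining this with the lemma-derived bound $d_{\tilde c}(v_{i+1})\leq d_{\tilde 0}(v_{i+1})+d_{\tilde 0}(u_0)$ (obtained by continuing the detour to $u_0$ and then taking a $\varsigma$-path from $u_0$) lets me bound $\eta_{\tilde c}(v_i,v_{i+1})$ for $i\geq 1$ by $2\varrho$.

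The main obstacle is exactly this detour case. Intermediate detour edges do not inherit the minmax estimate $\eta_{\tilde 0}(e)\leq\varrho$ that made the $P$-edge argument go through, so one has to argue instead that the case-3 penalty $c(e^*_{v_0})$ charged at the detour's entrance effectively prepays for the worst edge of the detour. The crux is that the agent only enters the detour when $\eta_{\tilde c}(v_0,v_1)\leq \eta_{\tilde c}(v_0,\text{next}_P(v_0))\leq 2\varrho$; combining this admission condition with the telescoping lemma and with the fact that all detour edges have cost at most $c(e^*_{v_0})$ should propagate the bound $\eta_{\tilde c}(v_i,v_{i+1})\leq 2\varrho$ along the entire detour.
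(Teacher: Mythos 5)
Your overall architecture matches the paper's proof (lower bound $\varrho/\beta$ via the minmax definition, the lemma $d_{\tilde{c}}(w)\leq 2d_{\tilde{0}}(w)$ proved by charging each case-3 penalty to the $\varsigma$-subpath it was copied from, and the on-path case $\zeta_{\tilde{c}}(v)\leq 2\eta_{\tilde{0}}(v,w)\leq 2\varrho$), and those parts are sound. The gap is exactly where you flag it: the detour case is not closed, and the specific combination you propose cannot close it. From $c(v_i,v_{i+1})\leq c(e^\ast_{v_0})$ and $d_{\tilde{c}}(v_{i+1})\leq d_{\tilde{0}}(v_{i+1})+d_{\tilde{0}}(u_0)$ you get $\eta_{\tilde{c}}(v_i,v_{i+1})\leq c(e^\ast_{v_0})+\beta d_{\tilde{0}}(v_{i+1})+\beta d_{\tilde{0}}(u_0)$; even after using the admission condition to bound $c(e^\ast_{v_0})\leq 2\varrho$, the two remaining terms are each only bounded by $\varrho$, so this chain yields roughly $4\varrho$, not $2\varrho$ (and without the admission condition $c(e^\ast_{v_0})$ can be as large as about $\varrho/\beta$). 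The $\tilde{0}$-distance lemma is simply the wrong tool here: it controls actual costs, whereas the detour bound has to be inherited from the perceived cost the agent already accepted when she entered the detour.

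The missing idea is a monotonicity statement for $\tilde{c}$-distances \emph{along the detour}: $d_{\tilde{c}}(v_{i+1})\leq d_{\tilde{c}}(v_1)$. This follows because every node of the detour lies off $P$, so its only outgoing edge without the $3\varrho/\beta$ surcharge is its $\varsigma$-edge; hence every path from $v_1$ to $t$ either passes through $v_{i+1}$ or costs at least $3\varrho/\beta$, giving $d_{\tilde{c}}(v_1)\geq\min\{3\varrho/\beta,\,d_{\tilde{c}}(v_{i+1})\}$, while the admission condition $\eta_{\tilde{c}}(v_0,v_1)\leq 2\varrho$ forces $d_{\tilde{c}}(v_1)\leq 2\varrho/\beta<3\varrho/\beta$. (Note this is a statement about shortest paths under $\tilde{c}$, not about which edges the agent is willing to traverse, so your earlier behavioral observation about case-2 edges does not substitute for it.) With it the detour case closes in one line, which is precisely the paper's argument:
\begin{align*}
	\eta_{\tilde{c}}(v_i,v_{i+1}) = c(v_i,v_{i+1}) + \beta d_{\tilde{c}}(v_{i+1}) \leq \tilde{c}(v_0,v_1) + \beta d_{\tilde{c}}(v_1) \leq \eta_{\tilde{c}}(v_0,v_1) \leq 2\varrho,
\end{align*}
using $c(v_i,v_{i+1})\leq c(e^\ast_{v_0})=\tilde{c}(v_0,v_1)$ and dropping the nonnegative $c(v_0,v_1)$. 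Until you supply this monotonicity step, the "should propagate" sentence is an unproven claim, and the theorem is not established.
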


\begin{proof}
Recall that $\varrho$ denotes the maximum perceived edge cost along the minmax path $P$.
From the above description of {\sc MinMaxPathApprox}, it should be evident that $\varrho/\beta$ is a lower bound on the minimum motivating reward of any cost configuration.
To prove the theorem, we need to show that the algorithm returns a cost configuration $\tilde{c}$ that is motivating for a reward of $2\varrho/\beta$.

As our first step we argue that the cost of a cheapest path from any node $v$ to $t$ with respect to $\tilde{c}$ is at most twice the cost of a cheapest path with respect to $\tilde{0}$.
More formally, we prove that $d_{\tilde{c}}(v) \leq 2d_{\tilde{0}}(v)$.
For this purpose let $P'$ be the $\varsigma$-path of~$v$.
By construction of $\varsigma$, $P'$ is a cheapest path from $v$ to $t$.
It is crucial to observe that {\sc MinMaxApprox} only assigns extra cost to an edge $(v',\varsigma(v'))$ of $P'$ if $v'$ is located on $P$.
Consequently, there is at most one edge with extra cost between any two consecutive intersections of $P$ and $P'$.
Furthermore, this extra cost is equal to the cost of an edge on $P'$ between $v'$ and the next intersection of $P$ and $P'$.
Therefore, each edge of $P'$ can contribute at most once to the total extra cost assigned to $P'$.
This means that the price of $P'$ with respect to $\tilde{c}$ is at most twice the price of $P'$ with respect to $\tilde{0}$.
Because the price of $P'$ is an upper bound for $d_{\tilde{c}}(v)$, we have shown that $d_{\tilde{c}}(v) \leq 2d_{\tilde{0}}(v)$.

We proceed to investigate the agent's walk through $G$.
Our goal is to show that her lowest perceived cost is at most $2\varrho$ at every node $v$ on her way.
This establishes the theorem.
Our analysis is based on the following case distinction:
First, assume that $v$ is located on $P$.
The immediate successor of $v$ on $P$ is denoted by $w$.
Remember that $\tilde{c}$ assigns no extra cost to $(v,w)$.
Using the result from the previous paragraph, we get
\begin{align*}
	\zeta_{\tilde{c}}(v) 	&\leq \eta_{\tilde{c}}(v,w) = c(v,w) + \beta d_{\tilde{c}}(v,w) \leq c(v,w) + \beta 2 d_{\tilde{0}}(v,w) \leq 2\bigl(c(v,w) + \beta d_{\tilde{0}}(v,w)\bigr)\\
							&= 2 \eta_{\tilde{0}}(v,w) \leq 2\varrho.
\end{align*}
The last inequality is valid by definition of $\varrho$.

Secondly, assume that $v$ is not located on $P$ and consider the last node $v'$ on $P$ the agent visited before $v$.
Because she traversed $(v',\varsigma(v'))$ to get to $v$, we know that $\eta_{\tilde{c}}(v',\varsigma(v')) \leq 2\varrho$ and $d_{\tilde{c}}(\varsigma(v')) \leq 2\varrho/\beta$.
We also know that she faces an extra cost of $3\varrho/\beta$ whenever she tries to leave the $\varsigma$-path $P'$ of $v'$ before the next intersection of $P$ and~$P'$.
Since she is not willing to pay this much, $v$ must be located on $P'$.
In particular, all paths from $\varsigma(v')$ to $t$ either visit $\varsigma(v)$ or cross an edge that charges an extra cost of $3\varrho/\beta$.
Consequently, a cheapest path from $\varsigma(v')$ to $t$ with respect to $\tilde{c}$ costs at least $d_{\tilde{c}}(\varsigma(v')) \geq \min\{3\varrho/\beta,d_{\tilde{c}}(\varsigma(v))\}$.
As $d_{\tilde{c}}(\varsigma(v')) \leq 2\varrho/\beta$, this implies that $d_{\tilde{c}}(\varsigma(v')) \geq d_{\tilde{c}}(\varsigma(v))$.
Our proof is almost complete.
For the final part, recall that $(v,\varsigma(v))$ is located on $P'$ between $v'$ and the next intersection of $P$ and $P'$.
By construction of $\tilde{c}$ we have $\tilde{c}(v',\varsigma(v')) \geq c(v,\varsigma(v))$.
Furthermore, $(v,\varsigma(v))$ has no extra cost.
Putting all the pieces together we get
\begin{align*}
	\zeta_{\tilde{c}}(v)	&\leq \eta_{\tilde{c}}(v,\varsigma(v)) = c(v,\varsigma(v)) + \beta d_{\tilde{c}}(\varsigma(v)) \leq \tilde{c}(v',\varsigma(v')) + \beta d_{\tilde{c}}(\varsigma(v'))\\
							&\leq c(v',\varsigma(v')) + \tilde{c}(v',\varsigma(v')) + \beta d_{\tilde{c}}(\varsigma(v')) = \eta_{\tilde{c}}(v',\varsigma(v')) \leq 2\varrho. \qedhere
\end{align*}
\end{proof}

To complement this result and emphasize the quality our approximation given the theoretical limitations, we argue that MCC-OPT is NP-hard to approximate within any ratio of $1.08192$ or less.
In particular, assuming that ${\rm P}\neq{\rm NP}$ this rules out the existence of a polynomial-time approximation scheme.

\begin{theorem}\label{th:npapx}
	MCC-OPT is NP-hard to approximate within a ratio less or equal to $1.08192$.
\end{theorem}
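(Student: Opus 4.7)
The plan is to upgrade the reduction behind Theorem~\ref{th:npcomp} into a gap reduction by starting from a hard-to-approximate variant of SAT, the natural candidate being H\aa stad's theorem: for every $\varepsilon>0$ it is NP-hard to distinguish satisfiable 3-SAT instances from those in which no assignment satisfies more than a $(7/8+\varepsilon)$-fraction of the clauses. I would reuse the graph gadget $G$ from the proof of Theorem~\ref{th:npcomp} almost verbatim, but treat the present bias $\beta$ as a free tuning parameter, and possibly pad the construction with a few clause/variable replicas so that the asymptotic behaviour of the gap is easier to analyse.

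For the YES case, the argument of Theorem~\ref{th:npcomp} directly produces a motivating cost configuration of reward $1/\beta$, giving the upper bound $\mathrm{MCC\text{-}OPT}(\mathcal{J})\le 1/\beta$. The work is in the NO case: any cost configuration $\tilde c$ induces a walk $P$ through $G$ which in turn encodes a variable assignment $\tau$. Since $\tau$ leaves at least a $(1/8-\varepsilon)$-fraction of clauses unsatisfied, one can isolate an unsatisfied clause $c_i$ all three of whose literal nodes have shortcuts into variable nodes lying on $P$. At such a $v_{i,j}$ the designer has no way to block all shortcuts without inflating the perceived cost at $s$; propagating the inequalities backwards through the chain of forward edges (each of cost $(1-\beta)^3-\varepsilon$) yields a lower bound of the form $\alpha(\beta)/\beta$ on the minimum motivating reward, where $\alpha(\beta)>1$ is an explicit rational function of $\beta$.

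To land the numerical constant $1.08192$, I would treat $\beta$ as an optimisation variable and write $\alpha(\beta)$ in closed form, then maximise it over $\beta\in(0,1)$ by standard calculus. If a single clause gap is insufficient, one can iterate the construction: replicating the clause block a constant number of times forces a constant fraction of the agent's literal-node decisions to sit inside unsatisfied clauses, so that the extra perceived cost accumulates multiplicatively rather than additively against the baseline $1/\beta$. The ratio between the YES-case and NO-case optima is then at least $1.08192$ for a suitable choice of $\beta$ and replication depth, which by H\aa stad's hardness rules out any polynomial-time approximation within ratio $1.08192$.

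The principal obstacle is the NO-case lower bound. The original proof only needed to exhibit \emph{one} cost configuration that fails; here we must show that \emph{every} cost configuration forces the perceived cost at $s$ above $\alpha(\beta)/\beta$, uniformly over all ways of blocking shortcuts and across all valid walks through the DAG. In particular, one must control how penalties placed on shortcuts propagate backwards through the present-bias discounting, showing that any attempt to block the dangerous shortcuts in an unsatisfied clause either (i) leaves one of them perceptibly cheaper than the intended forward edge, or (ii) raises $\zeta_{\tilde c}(s)$ above the target threshold. Making this dichotomy tight enough to produce the claimed constant, rather than some weaker factor approaching $1$ as $\beta\to 1$, is the delicate part of the argument.
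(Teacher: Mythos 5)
Your high-level strategy (reuse the Theorem~\ref{th:npcomp} gadget, treat $\beta$ as a tuning parameter, and turn the reduction into a gap reduction) is the right one, but the actual content of the theorem is exactly the part your plan leaves open, and two of your proposed ingredients would not work as described. The paper does not need H\aa stad's gap version of 3-SAT at all: the reduction from \emph{exact} 3-SAT already has built-in slack. Concretely, with a slightly more restrictive choice of $\varepsilon$ one shows (a) if ${\cal I}$ is satisfiable, a reward of $1/\beta$ admits a motivating cost configuration, and (b) if ${\cal I}$ is unsatisfiable, no cost configuration is motivating even for the larger reward $(1+\beta(1-\beta)^4)/\beta$. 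The proof of (b) re-runs the structural argument of Theorem~\ref{th:npcomp} with every inequality sharpened by the additive slack $\beta(1-\beta)^4$: shortcuts of all three types still cannot be traversed, so the walk encodes an assignment $\tau$; and for each clause, the bounded extra cost that can be placed on the shortcut leaving the variable node targeted by the chosen literal node (bounded because the planned path at the preceding literal node must stay perceived-cheap) is still too small to keep the agent from entering that shortcut if she later visits that variable node --- so $\tau$ must satisfy every clause, contradicting unsatisfiability. Note the conflict is a local one at intermediate nodes (planned path at $v_{i-1,j'}$ versus decision at $w_{k,y}$), not an inflation of $\zeta(s)$. Maximizing $1+\beta(1-\beta)^4$ over $\beta\in(0,1)$ gives $\beta=1/5$ and the constant $1+\tfrac{1}{5}(4/5)^4=1.08192$. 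This quantitative NO-case analysis, which you yourself flag as ``the delicate part,'' is precisely what is missing: without it you have neither the function $\alpha(\beta)$ nor the constant.

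Moreover, the two devices you propose for widening the gap do not help in this model. The minimum motivating reward is a bottleneck quantity --- essentially $\max_v \zeta_{\tilde c}(v)/\beta$ over the nodes the agent visits --- so neither penalties nor ``dangerous'' unsatisfied clauses accumulate along the walk; one bad node forces the same bound as many. Hence replicating clause blocks cannot make perceived cost ``accumulate multiplicatively,'' and the $(1/8-\varepsilon)$ fraction of unsatisfied clauses from H\aa stad's theorem cannot be converted into a larger gap than a single unsatisfied clause already yields. The inapproximability constant here comes from the local slack in the gadget (the difference between $1$ and $1+\beta(1-\beta)^4$ in the perceived-cost comparisons at literal and variable nodes), not from gap amplification, and reducing from plain 3-SAT suffices.
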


\begin{proof}
	To establish the theorem, a reduction similar to the one from Theorem~\ref{th:npcomp} can be used.
	In fact, given a $3$-SAT instance ${\cal I}$ we can construct the corresponding MCC-OPT instance ${\cal J}$ the same way as in the proof of Theorem~\ref{th:npcomp}.
	The only difference is that our choice of $\varepsilon$ is slightly more restrictive as we require
	\[\varepsilon < \min\Bigl\{\beta(1-\beta)^3, \beta(1-\beta)^2(2-\beta), \frac{\beta^2(1-\beta)^3}{1+\beta}, \frac{\beta^2(1-\beta)^2(2-\beta)}{1+\beta}\Bigr\}.\]  
	
	The proof can be structured around the following properties of ${\cal J}$:
	(a) If ${\cal I}$ has a solution, ${\cal J}$ admits a motivating cost configuration for a reward of $1/\beta$.
	(b) If ${\cal I}$ has no solution, ${\cal J}$ admits no motivating cost configuration for a reward of $(1 + \beta(1-\beta)^4)/\beta$ or less.
	Consequently, any algorithm that approximates MCC-OPT within a ratio of $1 + \beta(1-\beta)^4$ or less must also solve ${\cal I}$.
	To maximize this ratio we choose $\beta = 1/5$ and obtain the desired approximability bound, namely  $1 + (1-1/5)^4/5 = 1.08192$.
	All that remains to show is that ${\cal J}$ indeed satisfies (a) and (b).
	The correctness of (a) is an immediate consequence of the proof of Theorem~\ref{th:npcomp}.
	A detailed proof of (b) can be found in the Appendix.
\end{proof}

\section{Conclusion}
\label{sec:conc}

In this work we have used Kleinberg and Oren's graph theoretic framework~\cite{KO} to provide a systematic analysis of penalty-based commitment devices.
We have shown that penalty fees are strictly more powerful than commitment devices based on prohibition.
In particular, we have shown that penalties may outperform prohibitions by a factor of up to $1/\beta$.
We have also been able to obtain some of the first positive computational results for the algorithmic design of commitment devices.
We have given a polynomial-time algorithm to construct penalty fees that match an optimal solution by a factor of $2$.
This is significant progress when compared to prohibition-based commitment devices whose approximation is known to be NP-hard within a factor less than $\sqrt{n}/3$~\cite{AK}.
Due to their versatility, expressiveness and favorable computational properties, we believe that penalty-based commitment devices will prove to be a valuable tool for the targeted improvement of complex social and economic settings in the context of time-inconsistent behavior.



\bibliography{bib}

\newpage

\appendix

\section{Appendix}

\begin{proof}[Proof of Proposition \ref{prop:optstruc}]
	Let $P = v_1,\ldots,v_m$ and assume that $s = v_1$ and $t = v_m$.
	From the description of {\sc PathAndFence} it should be clear that whenever the algorithm assigns extra cost to an edge $(v_i,w)$, the perceived cost of that edge exceeds the perceived cost of $(v_i,v_{i+1})$ by at least $\beta\varepsilon/(m-2)$.
	Furthermore, since $G$ is acyclic, the extra cost of $(v_i,w)$ does not affect the agent's perceived cost of any previously processed edge $(v_j,w')$ with $j \geq i$.
	We conclude that {\sc PathAndFence} returns a cost configuration $\tilde{c}$ for which $\eta_{\tilde{c}}(v_{i},v_{i+1}) < \eta_{\tilde{c}}(v_{i},v_{i+1}) + \beta\varepsilon/(m-2) \leq \eta_{\tilde{c}}(v_{i},w)$.
	Consequently, the agent has no incentive to divert from $P$.

	In the remainder, we bound the perceived cost of each $(v_i,v_{i+1})$ by $\eta_{\tilde{c}}(v_i,v_{i+1}) \leq \eta_{\tilde{c}^\ast}(v_i,v_{i+1}) + \beta\varepsilon$.
	Together with the observations from the previous paragraph, this concludes the proof.
	Our argument is based on an induction on~$P$.
	The main induction hypothesis is $\eta_{\tilde{c}}(v_i,v_{i+1}) \leq \eta_{\tilde{c}^\ast}(v_i,v_{i+1}) + \beta\varepsilon(m-1-i)/(m-2)$.
	Clearly, this also implies that $\eta_{\tilde{c}}(v_i,v_{i+1}) \leq \eta_{\tilde{c}^\ast}(v_i,v_{i+1}) + \beta\varepsilon$.
	To simplify matters, we introduce $d_{\tilde{c}}(v_i) \leq d_{\tilde{c}^\ast}(v_i) + \beta\varepsilon(m-1-i)/(m-2)$ as an auxiliary induction hypothesis.

	We start the induction at the last edge of $P$, i.e.\ $i = m-1$.
	Our goal is to show that $\eta_{\tilde{c}}(v_{m-1},t) \leq \eta_{\tilde{c}^*}(v_{m-1},t)$ and $d_{\tilde{c}}(v_{m-1}) \leq d_{\tilde{c}^*}(v_{m-1})$.
	Recall that $(v_{m-1},t)$ minimizes the agent's perceived cost with respect to $\tilde{c}$.
	By definition of $P$ we also know that $(v_{m-1},t)$ minimizes her perceived cost with respect to $\tilde{c}^*$.
	Consequently, we have $\eta_{\tilde{c}}(v_{m-1},t) = \zeta_{\tilde{c}}(v_{m-1})$ and $\eta_{\tilde{c}^\ast}(v_{m-1},t) = \zeta_{\tilde{c}^\ast}(v_{m-1})$.
	Since $(v_{m-1},t)$ is the last edge of $P$, we conclude that
	\[\eta_{\tilde{c}}(v_{m-1},t) = \zeta_{\tilde{c}}(v_{m-1}) \leq d_{\tilde{c}}(v_{m-1}) \leq c(v_{m-1},t) + \tilde{c}(v_{m-1},t)\]
	as well as
	\[c(v_{m-1},t) + \tilde{c}^\ast(v_{m-1},t) = \eta_{\tilde{c}^\ast}(v_{m-1},t) = \zeta_{\tilde{c}^\ast}(v_{m-1}) \leq d_{\tilde{c}^\ast}(v_{m-1}).\]
	Moreover, $\tilde{c}$ assigns no extra cost to $(v_{m-1},t)$.
	Therefore, $\tilde{c}(v_{m-1},t) = 0 \leq \tilde{c}^\ast(v_{m-1},t)$ holds true.
	Combining the last three inequalities concludes the basis of our induction.
	
	For the inductive step, assume that $\eta_{\tilde{c}}(v_j,v_{j+1}) \leq \eta_{\tilde{c}^\ast}(v_j,v_{j+1}) + \beta\varepsilon(m-1-j)/(m-2)$ and $d_{\tilde{c}}(v_j) \leq d_{\tilde{c}^\ast}(v_j) + \beta\varepsilon(m-1-j)/(m-2)$ are valid for all $j$ such that $i < j < m$.
	We proceed to argue that both of these inequalities are also valid for~$i$.
	We start with the first inequality.
	By construction of $\tilde{c}$ we have $\tilde{c}(v_i,v_{i+1}) = 0 \leq \tilde{c}^*(v_i,v_{i+1})$.
	Consequently, we can bound the perceived cost of $(v_i,v_{i+1})$ by $\eta_{\tilde{c}}(v_i,v_{i+1}) \leq c(v_i,v_{i+1}) + \tilde{c}^\ast(v_i,v_{i+1}) + \beta d_{\tilde{c}}(v_{i+1})$.
	The auxiliary induction hypothesis now implies the desired result
	\begin{align*}
		\eta_{\tilde{c}}(v_i,v_{i+1})	&\leq c(v_i,v_{i+1}) + \tilde{c}^\ast(v_i,v_{i+1}) + \beta \Bigl(d_{\tilde{c}^*}(v_{i+1}) + \beta\varepsilon\frac{m-1-(i+1)}{m-2}\Bigr)\\
										&= \eta_{\tilde{c}^\ast}(v_i,v_{i+1}) + \beta^2\varepsilon\frac{m-2-i}{m-2} \leq \eta_{\tilde{c}^\ast}(v_i,v_{i+1}) + \beta\varepsilon\frac{m-1-i}{m-2}.
	\end{align*}

	The proof of the second inequality, i.e.\ $d_{\tilde{c}}(v_i) \leq d_{\tilde{c}^\ast}(v_i) + \beta\varepsilon(m-1-i)/(m-2)$, is a bit more involved.
	Let $(v_i,w)$ be the initial edge of a cheapest path $P'$ from $v_i$ to $t$ with respect to $\tilde{c}^\ast$.
	In a first step, we show that $d_{\tilde{c}}(w) \leq d_{\tilde{c}^\ast}(w) + \beta\varepsilon(m-2-i)/(m-2)$.
	For this purpose let $v_j$ be the node of smallest index different from $v_i$ that is located at an intersection between $P$ and $P'$.
	Because $P$ and $P'$ both end in $t$, such a node must exist.
	Recall that $\tilde{c}$ only assigns extra cost to edges that leave a node of $P$.
	By definition of $v_j$, no edge in $P'$ between $w$ and $v_j$ can be such an edge.
	Let $d_{\tilde{c}}(w,v_j)$ and $d_{\tilde{c}^\ast}(w,v_j)$ denote the cost of a cheapest path from $w$ to $v_j$ with respect to $\tilde{c}$ and $\tilde{c}^\ast$.
	According to our considerations, $d_{\tilde{c}}(w,v_j) \leq d_{\tilde{c}^\ast}(w,v_j)$ holds true.
	If $v_j=t$, this immediately implies $d_{\tilde{c}}(w) \leq d_{\tilde{c}^\ast}(w) + \beta\varepsilon(m-2-i)/(m-2)$.
	Otherwise, if $v_j\neq t$, we can apply the auxiliary induction hypothesis to obtain the desired result
	\[d_{\tilde{c}}(w) = d_{\tilde{c}}(w,v_j) + d_{\tilde{c}}(v_j) \leq d_{\tilde{c}^\ast}(w,v_j) + d_{\tilde{c}^\ast}(v_j) + \beta\varepsilon\frac{m-1-j}{m-2} \leq d_{\tilde{c}^\ast}(w) + \beta\varepsilon\frac{m-2-i}{m-2}.\]
	
	Finally, we take a closer look at the initial edge of $P'$.
	We distinguish between two scenarios.
	First, assume that $\tilde{c}$ assigns no extra cost to $(v_i,w)$.
	In this case, we have $\tilde{c}(v_i,w) = 0 \leq \tilde{c}^\ast(v_i,w)$.
	Together with the inequality from the previous paragraph, this immediately concludes the inductive step
	\begin{align*}
		d_{\tilde{c}}(v_i)	&\leq c(v_i,w) + \tilde{c}(v_i,w) + d_{\tilde{c}}(w) \leq c(v_i,w) + \tilde{c}^\ast(v_i,w) + d_{\tilde{c}^\ast}(w) + \beta\varepsilon\frac{m-2-i}{m-2}\\
							&\leq c(v_i,w) + \tilde{c}^\ast(v_i,w) + d_{\tilde{c}^\ast}(w) + \beta\varepsilon\frac{m-1-i}{m-2} = d_{\tilde{c}^\ast}(v_i) + \beta\varepsilon\frac{m-1-i}{m-2}.
	\end{align*}
	Secondly, assume that $\tilde{c}$ assigns positive cost to $(v_i,w)$.
	In this case, the perceived cost of $(v_i,w)$ with respect to $\tilde{c}$ is just slightly greater than that of $(v_i,v_{i+1})$.
	More formally, it holds true that $\eta_{\tilde{c}}(v_i,w) = \eta_{\tilde{c}}(v_i,v_{i+1})+\beta\varepsilon/(m-2)$.
	This follows from the construction of~$\tilde{c}$.
	Therefore, we can upper bound $d_{\tilde{c}}(v_i)$ by
	\begin{align*}
		d_{\tilde{c}}(v_i)	&\leq c(v_i,w) + \tilde{c}(v_i,w) + d_{\tilde{c}}(w) = \eta_{\tilde{c}}(v_i,w) + (1-\beta)d_{\tilde{c}}(w)\\
							&= \eta_{\tilde{c}}(v_i,v_{i+1}) + \beta\varepsilon\frac{1}{m-2} + (1-\beta)d_{\tilde{c}}(w).
	\end{align*}	
	Recall that $\eta_{\tilde{c}}(v_i,v_{i+1}) \leq \eta_{\tilde{c}^\ast}(v_i,v_{i+1}) + \beta^2\varepsilon(m-2-i)/(m-2)$.
	In combination with our upper bound on $d_{\tilde{c}}(w)$, we obtain
	\begin{align*}
		d_{\tilde{c}}(v_i)	&\leq \eta_{\tilde{c}^\ast}(v_i,v_{i+1}) + \beta^2\varepsilon\frac{m-2-i}{m-2} + \beta\varepsilon\frac{1}{m-2} + (1-\beta)\Bigl(d_{\tilde{c}^*}(w) + \beta\varepsilon\frac{m-2-i}{m-2}\Bigr)\\
							&= \eta_{\tilde{c}^\ast}(v_i,v_{i+1}) + (1-\beta)d_{\tilde{c}^\ast}(w) + \frac{m-1-i}{m-2}\beta\varepsilon.
	\end{align*}
	Because $(v_i,v_{i+1})$ minimizes the perceived cost at $v_i$ with respect to $\tilde{c}^\ast$, we may assume that $\eta_{\tilde{c}^\ast}(v_i,v_{i+1}) \leq \eta_{\tilde{c}^\ast}(v_i,w)$ and obtain
	\[d_{\tilde{c}}(v_i) \leq \eta_{\tilde{c}^\ast}(v_i,w) + (1-\beta)d_{\tilde{c}^\ast}(w) + \frac{m-1-i}{m-2}\beta\varepsilon = d_{\tilde{c}^\ast}(v_i) + \frac{m-1-i}{m-2}\beta\varepsilon. \qedhere\]		
\end{proof}

\begin{proof}[Proof of Theorem \ref{th:npcomp} (continued)]
	It remains to show that ${\cal I}$ has a satisfying variable assignment if and only if ${\cal J}$ has a cost configuration $\tilde{c}$ that is motivating for a reward of $1/\beta$.
	{($\Rightarrow$)}~We start by constructing $\tilde{c}$ from an assignment of truth values $\tau : \{x_1,\ldots,x_m\} \to \{T,F\}$ that satisfies each clause of ${\cal I}$.
	For this purpose we assign an extra cost of $(1-\beta)^2$ to the first edge of all shortcuts that start at variable nodes $w_{k,\tau(x_k)}$.
	Furthermore, we assign an extra cost of $1$ to all forward edges ending in a variable node of the from $w_{k,{\tau(\bar{x}_k)}}$. 
	To show that $\tilde{c}$ is indeed motivating, we divide the agent's walk into two separate parts.
	
	The first part contains the literal nodes from $s$ to $u_2$. 
	When located at a specific node $v_{i,j}$, the agent has two options: either she takes the shortcut or she follows a forward edge.
	In the first case, she ends up at a variable node.
	By construction of $G$, the cost of a cheapest path from any variable node to $t$ is at least $2-\beta$.
	This holds true regardless of extra cost.
	As a result, her perceived cost for taking the shortcut at $v_{i,j}$ is at least $(1-\beta)^2 + \beta(2-\beta) = 1$.
	Her other option is to take a forward edge.
	Assuming that $i < \ell$, let $j'$ be the index of a literal in $c_{i+1}$ that evaluates to true with respect to~$\tau$.
	Because $\tau$ is a satisfying variable assignment, such a literal must exist.
	The agent's perceived cost for traversing $(v_{i,j},v_{i+1,j'})$ and then taking the two direct shortcuts to $t$ is $(1-\beta)^3 - \varepsilon + \beta((1-\beta)^2 + (2-\beta)) = 1 - \varepsilon$.
	In the special case that $i = \ell$ we obtain the same perceived cost along the path $v_{\ell,j},u_1,u_2,t$. 
	Consequently, the agent always prefers at least one forward edge to the current shortcut.
	A similar argument shows that there is one forward edge with a perceived cost of $1 - \varepsilon$ out of $s$.
	Furthermore, there are no immediate shortcuts at $s$.
	Finally, when located at $u_1$, the agent has no choice but to traverse $(u_1,u_2)$.
	At this point her perceived cost of the path $u_1,u_2,t$ is $1$.
	Considering that her perceived value of the reward is $1$, we conclude that she follows the forward edges until she successfully completes the first part of her walk.
	
	The second part of the agent's walk contains the variable nodes from $u_2$ to $t$.
	At $u_2$ the agent has three options.
	First, she can follow the shortcut of type two.
	This has a cost of $2-\beta$ and is clearly not motivating.
	Secondly, she can traverse the forward edge to $w_{1,{\tau(\bar{x}_1)}}$.
	By construction of $\tilde{c}$, this edge has a cost greater than $1$.
	Again, this is not motivating.
	Thirdly, she can traverse the forward edge to $w_{1,\tau(x_1)}$.
	If the she plans to take the shortcut to~$t$ immediately afterwards, the perceived cost is $(1-\beta)^3 - \varepsilon + \beta((1-\beta)^2 + (2-\beta)) = 1 - \varepsilon$.
	Therefore, this is her preferred choice.
	Because it is also a motivating choice, she moves to $w_{1,{\tau(x_1)}}$ where she faces the same three options.
	The only difference is that this time the first option is a shortcut is of the third type and has a perceived cost of $(1-\beta)^2 + \beta(2-\beta) = 1$.
	Repeating the argument shows that the agent travels from one variable node $w_{k,\tau(x_k)}$ to the next $w_{k+1,\tau(x_{k+1})}$ until she gets to~$u_3$.
	At this point the only path to $t$ is along the nodes $u_3$, $u_4$ and $u_5$.
	Since the agent's lowest perceived cost at all three nodes is $1$, she remains motivated and eventually reaches $t$.
	We conclude that $\tilde{c}$ is motivating.

	($\Leftarrow$) Next, assume that ${\cal J}$ has a solution, i.e.\ there exists a cost configuration $\tilde{c}$ that is motivating for a reward of $1/\beta$.
	We proceed to show how to obtain a variable assignment $\tau$ that satisfies each clause of ${\cal I}$.
	For this purpose we make the following two observations:
	First, no motivating cost configuration can guide the agent onto a shortcut of type two or three.
	This is because these shortcuts have an edge of cost $(2-\beta) > 1$ and are too expensive to traverse for the given reward.
	Secondly, the agent cannot enter a shortcut of the first type either.
	To understand this, assume for a moment that she does take such a shortcut.
	The shortcut takes her from some literal node $v_{i,j}$ to a variable node $w_{k,y}$.
	Her perceived cost of $(v_{i,j}, w_{k,y})$ can be at most $1$, otherwise the shortcut would not be motivating.
	By construction of $G$ there is exactly one cheapest path from $w_{k,y}$ to $t$, namely the direct shortcut to $t$.
	As the total cost of this shortcut is $2-\beta$, the only way to achieve a perceived cost of $1$ for $(v_{i,j}, w_{k,y})$ is along this very shortcut.
	In particular, no extra cost can be assigned to this shortcut.
	However, once the agent has reached $w_{k,y}$, her perceived cost for taking the direct shortcut to $t$ is $\beta(2-\beta)$ as no extra cost is placed on this path.
	Conversely, her perceived cost of any forward edge at $w_{k,y}$ is at least $1 - \varepsilon$, even if we neglect extra cost.
	By choice of $\varepsilon$, it holds true that 
	\[(1-\varepsilon) - \beta(2-\beta) > \bigl(1 - (1-\beta)^2\bigr) - \beta(2-\beta) = 0.\]
	Consequently, the agent prefers the shortcut to any of the forward edges.
	This contradicts our previous observation that she does not take shortcuts of type three.

	Because no motivating cost configuration can guide the agent onto a shortcut, we conclude that her walk from $s$ to $t$ must contain exactly one literal node $v_{i,j}$ and one variable node $w_{k,y}$ for each clause and variable of ${\cal I}$.
	Let $P$ be one of possibly several paths the agent can walk from $s$ to $t$.
	Based on $P$, we construct a suitable variable assignment $\tau$ as follows:
	If she visits $w_{k,T}$ along $P$, we set $\tau(x_k) = T$.
	Otherwise, if she visits $w_{k,F}$, we set $\tau(x_k) = F$.
	To conclude the proof, we argue that $\tau$ satisfies all clauses of ${\cal I}$.

	Consider an arbitrary clause $c_i$ and let $v_{i,j}$ be the corresponding literal node in $P$.
	Furthermore, let $v_{i-1,j'}$ be the literal node preceding $v_{i,j}$ in $P$.
	If $i=1$, let $v_{i-1,j'} = s$.
	We denote the agent's planned path to $t$ when located at $v_{i-1,j'}$ by $P'$.
	Clearly, the first edge of $P'$ must be $(v_{i-1,j'}, v_{i,j})$ as this edge is also on $P$.
	For the next edge of $P'$ we have two options:
	The first is another forward edge.
	As a result, there must be some additional edge of cost $(1 - \beta)^2$ in $P'$.
	This can either be a subsequent shortcut of type one or $(u_1,u_2)$.
	Moreover, $P'$ must include the edges $(u_4,u_5)$ and $(u_5,t)$ or a shortcut of type two or three.
	In all cases, the total cost of these edges is at least $(1-\beta)^2 + (2 - \beta)$.
	Therefore, the agent's perceived cost of the first option sums up to a value greater or equal to
\[(1-\beta)^3 - \varepsilon + \beta\bigl((1-\beta)^3 - \varepsilon + (1-\beta)^2 + (2-\beta)\bigr) = 1 + (1+\beta)\Bigl(\frac{\beta(1-\beta)^3}{1+\beta} - \varepsilon\Bigr) > 1.\]
	The inequality is valid by choice of $\varepsilon$.
	Clearly, this is not motivating.
	The second option is that the next edge of $P'$ is the shortcut from $v_{i,j}$ to the corresponding variable node $w_{k,y}$.
	Again, we can distinguish between two cases.
	First, $P'$ might include a forward edge from $w_{k,y}$ to a subsequent variable node $w_{k+1,y'}$, or to $u_5$ if $k=m$.
	However, similar calculations to the one above indicate that this is not motivating.
	The only remaining option is that $P'$ contains the shortcut from $w_{k,y}$ to $t$.
	In this case, the perceived cost of $P'$ is at least $1-\varepsilon$.
	This leaves an extra cost of at most $\varepsilon/\beta$ to place onto the shortcut from $w_{k,y}$ to $t$.

	Now assume that $P$ includes $w_{k,y}$, i.e.\ the agent visits $w_{k,y}$ at a later point.
	Recall that her perceived cost for taking a forward edge at $w_{k,y}$ is at least $1 - \varepsilon$.
	By choice of $\varepsilon$, an extra cost of $\varepsilon/\beta$ is not sufficient to prevent her from entering the shortcut at $w_{k,y}$ as
	\[(1 - \varepsilon) - \Bigl(\frac{\varepsilon}{\beta} + \beta(2 - \beta)\Bigr) = \frac{1 + \beta}{\beta} \Bigl(\frac{\beta(1 - \beta)^2}{1+\beta} - \varepsilon\Bigr) > 0.\]
	Of course, this contradicts the fact that the agent cannot take shortcuts.
	Consequently, the agent cannot visit $w_{k,y}$ but must visit $w_{k,\bar{y}}$ instead.
	By construction of $G$, this implies that $\tau$ satisfies the $j$-th literal of $c_i$.
	Because this holds true for all clauses of ${\cal I}$, $\tau$ must be a satisfying variable assignment.
\end{proof}

\begin{proof}[Proof of Theorem~\ref{th:npapx} (continued)]
	In the following, we prove that ${\cal J}$ satisfies (b).
	For the sake of contradiction assume that there exists a cost configuration $\tilde{c}$ that is motivating for a reward of at most $(1 + \beta(1-\beta)^4)/\beta$, but ${\cal I}$ has no solution.
	Let $P$ be a path that corresponds to the agent's walk from $s$ to $t$.
	
	Similar to the proof of Theorem~\ref{th:npcomp} we first argue that $P$ cannot include shortcuts.
	Recall that shortcuts of the second and third type have an edge of cost $2 - \beta$.
	However, the agent's perceived reward is at most $1 + \beta(1-\beta)^4$.
	Because $2 - \beta = 1 + (1 - \beta) > 1 + \beta(1-\beta)^4$, she has no incentive to traverse such an edge.
	It remains to show that she does not take a shortcut of the first type either.
	For this purpose assume she travels from a literal node $v_{i,j}$ to some variable node $w_{k,y}$ via a shortcut of type one.
	Let $P'$ be her planned path when located at $v_{i,j}$.
	We distinguish between two scenarios.
	First, $P'$ might include a forward edge after $(v_{i,j},w_{k,y})$.
	Even if we neglect extra cost, her perceived cost of $P$ is at least
	\begin{align*}
		(1-\beta)^2 + \beta\bigl((1-\beta)^3 - \varepsilon + (2-\beta)\bigr)	&> (1-\beta)^2 + \beta\bigl((1-\beta)^3 - \beta(1-b)^3 + (2-\beta)\bigr)\\
																				&= 1 + \beta(1-\beta)^4.
	\end{align*}
	The inequality is valid by choice of $\varepsilon$.
	Because her perceived cost of $P'$ exceeds her perceived reward, this scenario is not possible.
	Secondly, $P'$ might contain the shortcut from $w_{k,y}$ to~$t$.
	In this case, the agent's perceived cost of $P'$ is at least $1$.
	Consequently, $\tilde{c}$ may assign an extra cost of no more than $(\beta(1-\beta)^4)/\beta = (1-\beta)^4$ to the edges of $P'$.
	This holds particularly true for edges of the shortcut from $w_{k,y}$ to $t$.
	Therefore, her perceived cost for taking the shortcut at $w_{k,y}$ is at most $(1-\beta)^4 + \beta(2 - \beta)$.
	Conversely, even without extra cost, her cost for taking a forward edge at $w_{k,y}$ is at least $1 - \varepsilon$.
	By choice of $\varepsilon$, she prefers the shortcut
	\[1 - \varepsilon > 1 - \beta(1 - \beta)^2(2-\beta) = (1-\beta)^4 + \beta(2 - \beta).\]
	This contradicts the fact that she does not enter a shortcut of type three.
	
	Because $\tilde{c}$ does not guide the agent onto a shortcut, we conclude that $P$ must contain exactly one literal node $v_{i,j}$ and one variable node $w_{k,y}$ for each clause and each variable of~${\cal I}$.
	Similar to the proof of Theorem~\ref{th:npcomp} we use $P$ to construct a variable assignment $\tau$ in the following way:
	If the agent visits $w_{k,T}$ along $P$, we set $\tau(x_k) = T$.
	Otherwise, if she visits $w_{k,F}$, we set $\tau(x_k) = F$.
	To conclude the proof we argue that $\tau$ satisfies all clauses of ${\cal I}$.
	This is a contradiction to our initial assumption that ${\cal I}$ has no solution.
	
	Consider an arbitrary clause $c_i$ and let $v_{i,j}$ be the corresponding literal node in $P$.
	Furthermore, let $v_{i-1,j'}$ be the literal node that precedes $v_{i,j}$ in $P$.
	If $i=1$, let $v_{i-1,j'} = s$.
	The agent's planned path from $v_{i-1,j'}$ to $t$ is denoted by $P'$.
	Clearly, the first edge of $P'$ must be $(v_{i-1,j'}, v_{i,j})$.
	In the next step two directions are possible.
	The first one is another forward edge.
	As argued in the proof of Theorem~\ref{th:npcomp} the perceived cost of $P'$ is at least
	\[(1-\beta)^3 - \varepsilon + \beta\bigl((1-\beta)^3 - \varepsilon + (1-\beta)^2 + (2-\beta)\bigr) = 1 + \beta(1-\beta)^3 - (1 + \beta)\varepsilon.\]
	By choice of $\varepsilon$, this is not motivating
	\[1 + \beta(1-\beta)^3 - (1 + \beta)\varepsilon > 1 + \beta(1-\beta)^3 - (1 + \beta)\frac{\beta^2(1-\beta)^3}{1+\beta} = 1 + \beta(1-\beta)^4.\]
	The second direction is along the shortcut from $v_{i,j}$ to some variable node $w_{k,y}$.
	Again, we can distinguish between two cases.
	First, $P'$ might contain a forward edge from $w_{k,y}$ to some variable node $w_{k+1,y'}$ or $u_5$ if $k=m$.
	However, a calculation similar to the one above indicates that this is not motivating.
	The only remaining possibility is that $P'$ also contains the shortcut from $w_{k,y}$ to $t$.
	In this case, the perceived cost of $P'$ is at least $1-\varepsilon$.
	This leaves an extra cost of no more than $(\varepsilon + \beta(1-\beta)^4)/\beta = \varepsilon/\beta + (1-\beta)^4$ to place onto the shortcut from $w_{k,y}$ to $t$.

	Now assume that $P$ also includes $w_{k,y}$.
	The agent's perceived cost for taking a forward edge from $w_{k,y}$ is at least~$1 - \varepsilon$.
	By choice of $\varepsilon$, we conclude that an extra cost of $\varepsilon/\beta + (1-\beta)^4$ is not sufficient to prevent the agent from entering the shortcut as
	\[(1 - \varepsilon) - \Bigl(\frac{\varepsilon}{\beta} + (1-\beta)^4  +\beta(2 - \beta)\Bigr) = \frac{1 + \beta}{\beta} \Bigl(\frac{\beta^2(1 - \beta)^2(2-\beta)}{1+\beta} - \varepsilon\Bigr) > 0.\]
	Of course, this contradicts the fact that the agent cannot take shortcuts.
	Consequently, the agent cannot visit $w_{k,y}$ but must visit $w_{k,\bar{y}}$ instead.
	By construction of $G$, this implies that $\tau$ satisfies the $j$-th literal of clause $c_i$.
	Because this holds true for all clauses of ${\cal I}$, $\tau$ must be a satisfying variable assignment.
\end{proof}
\end{document}